
\documentclass[acmsmall]{acmart}

\AtBeginDocument{%
	\providecommand\BibTeX{{%
			\normalfont B\kern-0.5em{\scshape i\kern-0.25em b}\kern-0.8em\TeX}}}

\setcopyright{rightsretained}
\acmDOI{10.1145/3643772}
\acmYear{2024}
\copyrightyear{2024}
\acmSubmissionID{fse24main-p1015-p}
\acmJournal{PACMSE}
\acmVolume{1}
\acmNumber{FSE}
\acmArticle{46}
\acmMonth{7}
\received{2023-09-28}
\received[accepted]{2024-01-23}
 

 
%


 
\usepackage{stmaryrd,amsfonts,dsfont,amsmath}
\usepackage{semantic,multirow,multicol,graphicx,comment}
\usepackage[algoruled,linesnumbered,noend]{algorithm2e}
\usepackage{cleveref}
\usepackage{makecell}
\usepackage{colortbl,booktabs}
\usepackage{listings,framed}
\usepackage{subcaption}
\extrarowheight=\aboverulesep 
\usepackage{enumitem}

\newcommand{\tool}{{\sc CT-Prover}\xspace}

\newcommand{\toollight}{{\sc CT-Prover}$^{-}$\xspace}

\newcommand{\while}{{\sc While}\xspace}

\newcommand{\myblue}[1]{\textcolor{blue}{#1}}
\newcommand{\mycyan}[1]{\textcolor{cyan}{#1}}
\newcommand{\sskip}{\myblue{\tt skip}\xspace}
\newcommand{\sassume}{\myblue{\tt assume}\xspace}
\newcommand{\Prog}{{\tt P}}
\newcommand{\sassert}{\myblue{\tt assert}\xspace}

\newcommand{\sif}{\myblue{\tt if}\xspace}
\newcommand{\sfi}{\myblue{\tt fi}\xspace}
\newcommand{\sthen}{\myblue{\tt then}\xspace}
\newcommand{\selse}{\myblue{\tt else}\xspace}
\newcommand{\swhile}{\myblue{\tt while}\xspace}
\newcommand{\sreturn}{\myblue{\tt return}\xspace}

\newcommand{\bbegin}{{\tt begin}\xspace}
\newcommand{\eexit}{{\tt exit}\xspace}

\newcommand{\sdo}{\myblue{\tt do}\xspace}
\newcommand{\sod}{\myblue{\tt od}\xspace}
\newcommand{\Loc}{\mathbb{L}}
\newcommand{\Val}{\mathbb{V}}

\newcommand{\true}{\myblue{\tt true}\xspace}
\newcommand{\false}{\myblue{\tt false}\xspace}

\newcommand{\sdef}{\myblue{\tt def}\xspace}

\newcommand{\Var}{\mathbb{X}}

\newcommand{\var}{{\tt var}\xspace}
\newcommand{\lfp}{{\tt lfp}\xspace}

\newcommand{\Guard}{\mycyan {\tt Guard}\xspace}

\newcommand{\smain}{{\tt main}}

\newcommand{\obs}{\mathcal{O}}

\newcommand{\Nn}{\mathbb{N}}

\usepackage[many]{tcolorbox}

\lstset{
  language=C,
  xleftmargin=2em,
  xrightmargin=2em,
  basicstyle=\footnotesize\ttfamily,
  numbers=left,
  numberstyle=\tiny,
  stepnumber=1,
  numbersep=5pt,
  tabsize=2,
  showspaces=false,
  showstringspaces=false,
  frame=single,
  breaklines=true,
  breakatwhitespace=true,
  keywordstyle=\color{blue},
  commentstyle=\color{cyan},
  stringstyle=\color{red},
  morekeywords={uint8_t, u8, size_t, uint32_t, uint64_t} 
}


\hyphenation{op-tical net-works semi-conduc-tor cha-llen-ging}
 
\begin{document}

\title{Towards Efficient Verification of Constant-Time Cryptographic Implementations}

\author{Luwei Cai}
\orcid{0000-0003-3148-7926}
\affiliation{%
  \institution{ShanghaiTech University}
  \city{Shanghai}
  \country{China}
}
\email{cailw@shanghaitech.edu.cn}

\author{Fu Song}
\orcid{0000-0002-0581-2679}
\authornote{Corresponding author}
\email{songfu@ios.ac.cn}
\affiliation{%
  \institution{State Key Laboratory of Computer Science, Institute of Software, Chinese Academy of Sciences}
  \city{Beijing}
  \country{China}
}
\additionalaffiliation{
\institution{University of Chinese Academy of Sciences, and Nanjing Institute of Software Technology}
}
 
\author{Taolue Chen}
\orcid{0000-0002-5993-1665}
\affiliation{%
  \institution{Birkbeck, University of London}
  \city{London}
  \country{UK}
}
\email{t.chen@bbk.ac.uk}
 
	
\begin{abstract}
Timing side-channel attacks exploit secret-dependent execution time to fully or partially recover secrets of cryptographic implementations, posing a severe threat to software security. Constant-time programming discipline is an effective software-based countermeasure against timing side-channel attacks, but developing constant-time implementations turns out to be challenging and error-prone. Current verification approaches/tools suffer from scalability and precision issues 
when applied to production software in practice. In this paper, we put forward practical verification approaches based on a novel synergy of taint analysis and safety verification of self-composed programs. Specifically, we first use an IFDS-based lightweight taint analysis to prove that a large number of potential (timing) side-channel sources do not actually leak secrets. We then resort to a precise taint analysis and a safety verification approach to 
determine whether the remaining potential side-channel sources can actually leak secrets. These include novel constructions of taint-directed semi-cross-product of the original program and its Boolean abstraction, and a taint-directed self-composition of the program. Our approach is implemented as a cross-platform and fully automated tool {\tool}. The experiments confirm its efficiency and effectiveness in verifying real-world benchmarks from modern cryptographic and SSL/TLS libraries. 
In particular, {\tool} identify new, confirmed vulnerabilities of open-source SSL libraries (e.g., Mbed SSL, BearSSL) and significantly outperforms the state-of-the-art tools.
\end{abstract}


\begin{CCSXML}
		<ccs2012>
		<concept>
		<concept_id>10011007.10011074.10011099.10011692</concept_id>
		<concept_desc>Software and its engineering~Formal software verification</concept_desc>
		<concept_significance>500</concept_significance>
		</concept>
		<concept>
		<concept_id>10003752.10010124.10010138.10010143</concept_id>
		<concept_desc>Theory of computation~Program analysis</concept_desc>
		<concept_significance>500</concept_significance>
		</concept>
		<concept>
		<concept_id>10002978.10002986.10002989</concept_id>
		<concept_desc>Security and privacy~Formal security models</concept_desc>
		<concept_significance>500</concept_significance>
		</concept>
		<concept>
		<concept_id>10002978.10002986.10002990</concept_id>
		<concept_desc>Security and privacy~Logic and verification</concept_desc>
		<concept_significance>500</concept_significance>
		</concept>
		</ccs2012>
\end{CCSXML}
	
\ccsdesc[500]{Software and its engineering~Formal software verification}
\ccsdesc[500]{Theory of computation~Program analysis}
\ccsdesc[500]{Security and privacy~Formal security models}
\ccsdesc[500]{Security and privacy~Logic and verification}
	
\keywords{Timing side-channel, constant-time cryptographic implementation, formal verification, taint analysis}

 \maketitle
	
\section{Introduction}
\label{sec:intr}
The security of contemporary software systems
and communication heavily depends upon cryptographic implementations, which are the main focus of the current paper. Timing side-channel attacks~\cite{Kocher96,BrumleyT11} can exploit 
secret-dependent execution time to fully or partially recover secrets even remotely, thus posing a severe threat to software security. 
Over the past few years, numerous timing side-channel vulnerabilities have been discovered, allowing adversaries to deduce secrets with very few trials.
Notorious examples include the Lucky 13 attack that can remotely recover plaintext from the CBC-mode encryption in TLS and DTLS (due to an unbalanced branching statement~\cite{AlFardanP13}),
and Brumley and Boneh's remote private key recovery attack against the sliding window exponentiation of the RSA decryption in OpenSSL~\cite{brumley2003remote,BrumleyT11}.
It is vital to implement effective countermeasures to protect cryptographic implementations.

There are different means to mitigate the risk of timing side-channels, for instance, via 
security-aware system (e.g.,~\cite{LiGR14}) and architecture (e.g.,~\cite{DanielBNBRP23}). 
A more effective approach is to 
come up with better software implementation to eliminate the root cause.
Currently, there are two prevailing countermeasures, i.e., \emph{constant-time} and \emph{time-balancing} programming disciplines.
The former requires that control flow and memory-access patterns of an implementation are independent of the secrets; the latter requires the execution time to be negligibly influenced by secrets which can be considered as a tradeoff between security and enforceability.
Both countermeasures have been adopted in open-source cryptographic and SSL/TLS libraries such as NaCl~\cite{BernsteinLS12}, BearSSL~\cite{BearSSL23}, Mbed TLS~\cite{MbedTLS23}, s2n-tls~\cite{s2n23}, and OpenSSL~\cite{openSSL23}.
Writing constant-time or time-balancing implementations requires the use of low-level programming languages or compiler knowledge, and developers usually need to deviate from standard programming practices.  
Furthermore, their correctness depends on global properties across pairs of executions, 
hence is difficult to reason about. For instance, even though two protections against Lucky 13 were implemented in s2n-tls, the Lucky microseconds attack, a variant of Lucky 13, can remotely and completely recover plaintext from the CBC-mode cipher suites in s2n-tls, resulting in a complete recovery of HTTP session cookies and user credentials such as BasicAuth passwords~\cite{AlbrechtP16}. Alas, timing side-channel attacks remain a live threat for cryptographic libraries after their discovery over 25 years ago~\cite{YLW23},
and it is essential to develop automated reasoning tools for formally verifying these countermeasures. 

In this paper, we focus on the constant-time programming discipline as it represents a more fundamental solution and is more challenging to tackle.
Numerous verification approaches have been proposed (cf.\ Section~\ref{sect:related}).  
A majority of them leverage (lightweight) static analysis, e.g., abstract interpretation, type system, taint analysis and (relational) symbolic execution which are sound (and usually efficient) 
but incomplete (i.e., false positives may occur or depth of explored paths is bound). 
Different from them, the self-composition~\cite{ABBDE16} approach 
reduces constant-time verification to safety verification
by building a program consisting of two copies of the given program and verifying whether the values of the low-security (public) variables in the two copies are identical
provided that the public inputs are identical. Self-composition based approaches are sound and 
complete in theory, although in practice, the completeness may depend on verification tools to carry out challenging safety property checking on self-composed programs (cf.~Section~\ref{sect:rq1} for concrete examples).

Self-composition based approaches are normally considered as a heavyweight approach which is not scalable. The primary reason is that they need to deal with a self-composed program of quadratic sizes. 
As a result, many efforts have been made to ease safety verification of self-composed programs, including 
self-composition with lockstep execution of loops~\cite{SousaD16} for $k$-safety properties; 
self-composition with lockstep execution of both loops and branches (called cross-product)~\cite{ABBDE16}
for constant-time properties;
type-directed self-composition~\cite{TerauchiA05} 
and lazy self-composition~\cite{YangVSGM18} for proving information flow properties. The commonality of these approaches is to simplify self-composed programs,
reduce the number of safety checks, and/or keep variables from the two copies near each other. 

Despite these efforts, a noticeable gap exists in verifying constant-time countermeasures in software engineering practice. To the best of our knowledge, {\sf ct-verif}~\cite{ABBDE16} is the only publicly available self-composition based tool that 
is being actively deployed in the continuous integration of Amazon's s2n-tls library~\cite{s2nctverif}, 
but is significantly less efficient than lightweight static analysis approaches (e.g., taint analysis and type system), and often fails to prove constant-time implementations (cf.\ Section~\ref{sec:expe}).  
In summary, the current status is that the user either chooses an incomplete approach but needs to tackle false positives, or chooses a (relatively) complete approach which is costly and may fail on a number of occasions.   

The main purpose of the current paper is to make the self-composition based approaches scalable so they can be used to handle the verification of constant-time cryptographic implementations at the industry level.  
Our strategy towards both completeness and scalability is to take a stratified approach. Technically, starting from potential timing side-channel sources which can be identified straightforwardly (cf.\ Section~\ref{sec:CT}), we devise two different taint analyses and a taint-directed cross-product, and gradually integrate them to resolve these potential sources, i.e., to determine whether they can actually cause information leakage. 

The first taint analysis (cf.~Section~\ref{sec:1sttaint}) leverages the inter-procedural, finite, distributive, subset framework (IFDS~\cite{RHS95}), which is
accelerated by propagating the data-flow facts sparsely~\cite{OhHLLY12}.
This taint analysis is flow-, field- and context-sensitive, but path- and index-insensitive.
It is often able to efficiently prove that a large number of 
potential (timing) side-channel sources do not actually leak secrets, leaving a relatively small number of them unresolved to the next step.

The second taint analysis (cf.~Section~\ref{sect:2ndtaint}) 
uses a novel \emph{taint-directed semi-cross-product}, which reduces the flow-, context-, path-, field- and index-sensitive taint analysis problem to checking safety properties
of the cross-product of the given program and its Boolean abstraction. 
The Boolean abstraction is used to track the required information flow from the secrets.
This precise taint analysis would be able to further resolve many remaining potential side-channel sources.
It is worth noting that our cross-product is taint-directed, namely, it is based on the taint information from the first taint analysis, which greatly reduces the number of safety checks and simplifies the product program, hence improving the verification efficiency. 

Finally, to resolve the remaining (usually few) potential side-channel sources, we propose a \emph{taint-directed cross-product} (cf.~Section~\ref{sec:cpverif}), which reduces the constant-time security problem to the safety problem of the cross-product of the program
where taint information is also used to reduce the number of safety checks and simplify the cross-product program.

We implement our approaches as a fully automated, cross-platform tool \tool for verifying (optimized) LLVM IR implementations. To evaluate \tool, 
we collect 87 real-world implementations from modern cryptographic and SSL/TLS libraries, as well as fixed-point arithmetic libraries. These benchmarks include cryptographic utilities, arithmetic operations, public and private key cryptography,
and algorithms for encryption, decryption, message authentication code, and digital signature.
The experiment results confirm the effectiveness and efficiency of our approach. In particular, \tool (dis)proves all the (non-)constant-time implementations and find new vulnerabilities in open-source libraries Mbed SSL and BearSSL,
and is typically significantly faster than
state-of-the-art tools.

In summary, we make the following main contributions:
\begin{itemize}
  \item We provide a novel synergy of taint analysis and self-composition, improving the efficiency and scalability of verification of constant-time cryptographic implementations;

 \item We develop a fully automated tool to support the verification for production software at the industrial level;

 \item We conduct an extensive evaluation on a large set of real-world programs, and identify new, confirmed timing side-channel vulnerabilities of open-source SSL libraries.   
\end{itemize}

\medskip
\noindent
{\bf Outline.} Section~\ref{sec:pre} presents the background of the work, including the {\while} language and basics of constant-time security. Section~\ref{sec:motivatiingexample} gives motivating examples and an overview of our approach. Section~\ref{sec:method} presents the details of our approaches. Section~\ref{sec:expe} reports the experiment results. Section~\ref{sect:related} discusses the related work. The paper is concluded in Section~\ref{sec:concl}. 
  	 
\section{Preliminaries} \label{sec:pre}
Our constant-time verification tool works over programs in LLVM intermediate representation (IR). There are three main reasons: (1) LLVM IR is a low-level architecture-independent language so verification can be performed on optimized LLVM IR programs, 
(2) it is more convenient to verify and debug LLVM IR programs than binary executables,
and (3) the verified compiler CompCert offers constant-time preserving compilation~\cite{BartheBGHLPT20}
and Binsec/Rel~\cite{DanielBR20,DanielBR23} offers bug-finding and bounded verification for binary executables.
However, for clarity, we use the \while language enriched with arrays, assert/assume statements and procedures, to
define the notion of constant-time security and formalize our verification approach.

\subsection{The \while Language}
{\bf Syntax}. The syntax of the \while language is given in Figure~\ref{fig:syntax}.
A \while program consists of a sequence of procedures, one of which is the {\smain} procedure as the entry of the program.
A procedure contains a sequence of formal arguments and a sequence of
statements followed by a {\sreturn} statement. Note that in our \while language,
the {\sreturn} statement can return a tuple of values which is required for constructing
cross-products. 
For each procedure $f$, we denote by $\Var_f$ the set of  variables used by $f$, including its formal arguments.

\begin{figure}[t]
\small
\setlength\FrameSep{0mm}
\begin{framed}
\[\begin{array}{lrcl}
\mbox{Expressions:} &  e  & ::= & n \mid x\mid e_1 \odot e_2 \mid x[y] \\
\mbox{Statements:} &   p & ::= &\sskip \mid x:=e \mid x[y]:=z \mid \sassert\ e \mid \sassume\ e\mid p;\ p \mid \swhile\ x\ \sdo\ p\ \sod\\
                  &  &  \mid  & \sif\ x\ \sthen\ p_1\ \selse\ p_2\ \sfi \mid 
                  x_1,\cdots,x_m:=f(y_1,\cdots,y_n)  \\
\mbox{Procedures:} &   fn & ::= &\sdef\ f(x_1,\cdots,x_n)\{ p;\ \sreturn~y_1,\cdots, y_m;\}\\
\mbox{Programs:} &   P & ::= &fn^{+}
\end{array}\]
\end{framed}
\vspace{-2mm}
  \caption{The syntax of \while.}\label{fig:syntax}
\end{figure}

Statements include \sskip\ statements, \sassert\ and \sassume\ statements, sequential statements,
\sif-\sthen-\selse\ and \swhile-\sdo\ statements,  assignments, and procedure calls.
As in {\sf ct-verif}~\cite{ABBDE16}, we include the \sassert\ and \sassume\ statements to simplify the reduction to safety checking.  
We assume that each statement is annotated
by a distinct label $\ell$.
Expressions include constants, variables, arithmetic
operations and array accesses. We use $\odot$ to range over
binary operators which are deterministic and side-effect free.
(Unary operators can be defined similarly and are not presented here.)
W.l.o.g., We assume that 
\while programs are given in the single-static assignment (SSA) form, array-read $x[y]$ cannot be used as sub-expressions and all identifies in a program are distinct.

\medskip
\noindent
{\bf Semantics}.
Fix a \while\ program $\Prog$.
Let $\Var$ be the set of variables of $\Prog$,
$\Loc\subseteq\Var\cup(\Var\times \Nn)$ be the set of locations comprising scalar variables and pairs $(x,i)$ of array variables $x$ and indices $i$,
and $\Val$ be the set of possible values of variables.
A \emph{state} $s:\Loc \rightarrow \Val$ is a mapping from
locations $l$ to values $s(l)$, namely, it
maps variables $x$ (resp. array elements $x[i]$) to values
$s(x)$ (resp. $s(x,i)$).
An \emph{initial state} $s_0$ is a mapping that only gives the values of the input variables of $\Prog$, i.e., the parameters of the \smain\ procedure.
The update of a state $s$ is written as $s[l \mapsto n]$.
We define $\bot$ as a distinguished \emph{error state} at which the execution of the program
is disabled. We denote by $s(e)$ the value of the expression $e$ in the state $s$,
i.e., $s(n)=n$, $s(e_1\odot e_2)=s(e_1)\odot s(e_2)$,
and $s(x[y])= s(x,s(y))$. 

A \emph{configuration} $c$ is a pair $\langle s,p\rangle$ consisting of a state $s$ and a statement $p$ to be executed.
An \emph{initial configuration} $c_0$ is a pair $\langle s_0,p\rangle$ such that
$s_0$ is an initial state and $p$ is the statement of the \smain\ procedure excluding the ending \sreturn\ statement.
The semantics of the program $\Prog$ is defined as a transition relation $c{\rightarrow} c'$
between two configurations. We denote by ${\rightarrow}^{\star}$  the reflexive and transitive closure of the relation $\rightarrow$.
The transition relation $c{\rightarrow} c'$  is given in~\Cref{fig:semantics}. 
For the sake of simplicity, array bounds are not checked in our semantics,
and we assume that they are checked by using \sassert\ statements, thus
executions are stuck on the error state when indices are out of the range of the array.

An \emph{execution} $\rho$ of the program $\Prog$ is a sequence of configurations $c_0c_1\cdots c_n$
such that $c_0$ is an initial configuration and $c_i{\rightarrow} c_{i+1}$ for every $0\leq i<n$. 
An execution $\rho$ is \emph{safe} if it does not stick on a configuration $\langle \bot, \cdot\rangle$ with the error state $\bot$,
and is \emph{complete} if it ends with a configuration $\langle \cdot, \sskip\rangle$. 
The program $\Prog$ is \emph{safe} if all the executions are safe.
We remark that in this work, we assume that programs always terminate (i.e., either complete or stick on),
because we focus on cryptographic implementations.
Termination can be checked by tools, e.g.,
CPAchecker~\cite{BeyerK11} and T2~\cite{BrockschmidtCIK16}.

\begin{figure}[t]
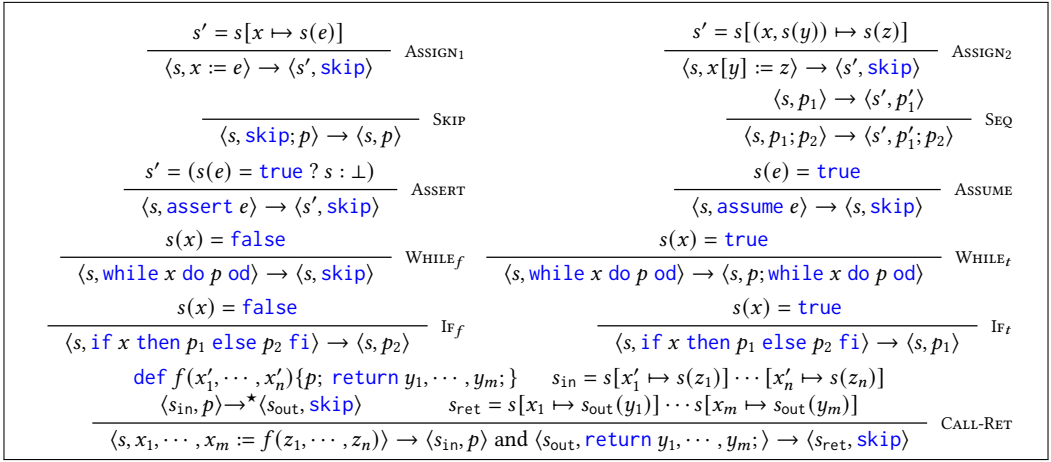

\setlength\FrameSep{1mm}
\begin{framed}
\centering\setlength{\tabcolsep}{4pt}
\renewcommand{\arraystretch}{2}
 \scalebox{0.85}{
 \begin{tabular}{rr}
   $\inference{s'=s[x\mapsto s(e)]} {\langle s, x:=e \rangle \rightarrow \langle s',\sskip\rangle }[{\sc Assign}$_1$]$
 & $\inference{s'=s[(x,s(y))\mapsto s(z)]} {\langle s, x[y]:=z \rangle \rightarrow \langle s',\sskip\rangle }[{\sc Assign}$_2$]$ \\ 
   $\inference{} {\langle s, \sskip; p \rangle \rightarrow \langle s,p \rangle }[{\sc Skip}]$
 & $\inference{\langle s, p_1\rangle \rightarrow \langle s', p_1'\rangle } {\langle s, p_1; p_2 \rangle \rightarrow \langle s', p_1'; p_2 \rangle}[{\sc Seq}]$\\
   $\inference{s'= (s(e)=\true~?~s:\bot)} {\langle s, \sassert\ e \rangle \rightarrow \langle s',\sskip \rangle }[{\sc Assert}]$
 & $\inference{s(e)=\true} {\langle s, \sassume\ e \rangle \rightarrow \langle s, \sskip \rangle }[{\sc Assume}]$\\
   $\inference{s(x)=\false} {\langle s, \swhile\ x\ \sdo\ p\ \sod \rangle \rightarrow \langle s, \sskip \rangle }[{\sc While}$_f$]$
&  $\inference{s(x)=\true} {\langle s, \swhile\ x\ \sdo\ p\ \sod \rangle \rightarrow \langle s, p; \swhile\ x\ \sdo\ p\ \sod \rangle }[{\sc While}$_t$]$\\
$\inference{s(x)=\false} {\langle s, \sif\ x\ \sthen\ p_1\ \selse\ p_2\ \sfi\rangle \rightarrow \langle s, p_2\rangle }[{\sc If}$_f$]$
  &  $\inference{s(x)=\true} {\langle s, \sif\ x\ \sthen\ p_1\ \selse\ p_2\ \sfi\rangle \rightarrow \langle s, p_1\rangle }[{\sc If}$_t$]$ 
  \\ \specialrule{0em}{3pt}{3pt}
  \multicolumn{2}{r}{$\inference{\sdef\ f(x_1',\cdots,x_n')\{ p;\ \sreturn~y_1,\cdots, y_m;\} & s_{\tt in}=s[x_1'\mapsto s(z_1)]\cdots[x_n'\mapsto s(z_n)] \\
\langle s_{\tt in}, p\rangle{\rightarrow}^{\star} \langle s_{\tt out}, \sskip\rangle & s_{\tt ret}=s[x_1\mapsto s_{\tt out}(y_1)]\cdots s[x_m\mapsto s_{\tt out}(y_m)]
} {\langle s, x_1,\cdots,x_m:=f(z_1,\cdots,z_n) \rangle\rightarrow \langle s_{\tt in}, p\rangle\ \mbox{and}\ \langle s_{\tt out}, \sreturn~y_1,\cdots, y_m;\rangle\rightarrow \langle s_{\tt ret}, \sskip\rangle}[{\sc Call-Ret}]$}\\
\end{tabular}}
\end{framed}
\vspace{-2mm}
  \caption{The operational semantics of the \while program.}
  \label{fig:semantics}\vspace{-2mm}
\end{figure}


\subsection{Constant-Time Security}\label{sec:CT}
In practice, execution time variations can create 
timing side-channel in various forms:
(1) unbalanced branching statements may expose the information of the branching condition,
(2) non-constant loops may expose the information of the loop condition,
(3) memory access patterns (cache hits and misses) may expose the information of the memory address
(i.e., indices of arrays in the \while\ programs) accessed in load and store instructions,
(4) time-variant instructions (e.g., integer divisions) in some architectures may expose the information of operands,
and (5) micro-architectural features (e.g., Spectre \cite{KocherHFGGHHLM019} and Meltdown \cite{Lipp0G0HFHMKGYH18}) may break conventional constant-time guarantees.
In this work, we consider the first three timing side-channels 
and use the common leakage model~\cite{ABBDE16,BPT17,DanielBR20,YLW23} to characterize constant-time security.
(The last two timing side-channels are not considered because they are architecture-dependent while we consider LLVM IR which is architecture-independent.) Note that our methodology is generic
and could be adapted to handle the fourth timing side-channel by listing
all the timing-sensitive operations~\cite{ABBDE16}.
Detection, verification and mitigation techniques of the fifth type of timing side-channels 
have been studied (cf.~\cite{CauligiDMBS22} for a survey), most of which assume that
programs are constant-time without micro-architectural features, and thus are orthogonal to our work.

\begin{definition}[Constant-time Leakage Model]\label{def:leakagemodel}
Given a configuration $c=\langle s,p\rangle$ such that $s\neq \bot$, the observation $\obs(c)$   is defined as follows. 
\begin{enumerate}
  \item if $p$ is a branching statement $\sif\ x\ \sthen\ p_1\ \selse\ p_2\ \sfi$, then $\obs(c)=s(x)$, namely, the value of the branching condition $x$ is observable to the adversary;
  \item if $p$ is a loop statement $\swhile\ x\ \sdo\ p'\ \sod$, then $\obs(c)=s(x)$, namely, the value of the loop condition $x$ is observable to the adversary;
  \item if $p$ is an assignment $z:=y[x]$, then $\obs(c)=s(x)$, namely, the value of the index $x$ in the load instruction is observable to the adversary;  
  \item if $p$ is an assignment $y[x]:=z$, then $\obs(c)=s(x)$, namely, the value of the index $x$ in the store instruction is observable to the adversary,
\item if $p$ is an sequential statement $p_1;p_2$, then $\obs(c)=\obs(\langle s,p_1\rangle)$, namely, only the executing instruction $p_1$ is considered (Note that $p_2$ will be considered in a subsequent configuration);
\item otherwise $\obs(c)=\epsilon$, where $\epsilon$ denotes an empty observation.
\end{enumerate}
\end{definition}

For each statement $p$ with label $\ell$ (denoted by $\ell:p$) 
as per Definition~\ref{def:leakagemodel}(1)-(4) where $x$ is the operand or condition, $(\ell,x)$ is called a \emph{potential (timing) side-channel source}.
Intuitively, the value of $x$ at label $\ell$ is observable to the adversary.

An execution $\rho=c_0\cdots c_n$ yields the observation $\obs(\rho)=\obs(c_0)\cdots \obs(c_n)$. Two executions $\rho_1$ and $\rho_2$ are \emph{indistinguishable} (to the adversary with respect to the leakage model $\obs$) if $\obs(\rho_1)=\obs(\rho_2)$.
It is easy to see that two indistinguishable executions $\rho_1$ and $\rho_2$ must have the same control flow, i.e.,
they execute the same conditional branches and iterations of loops, thus the sequences of executed statements are the same. 
This observation is utilized to define cross-product~\cite{ABBDE16}, i.e., self-composition with lockstep execution of both loops and branches,
namely, the copies share the same control flow.

Given a program $\Prog$, we assume that the input variables $\Var^{in}\subseteq \Var_{\smain}$ are partitioned into \emph{public} input variables $\Var^{in}_l$ and \emph{secret} input variables $\Var^{in}_h$. 
These sets are to be annotated by users.
(For the sake of presentation, an input array variable should be annotated by either public or secret, meaning
that all the elements of the array are public or secret.)
In our implementation, we provide API wrappers to precisely
annotate elements of input arrays and fields of
input structures. 
The adversary knows the implementation details of the program and has access to the values of public input variables at runtime, but does not have any direct access to the values of other variables. 
The goal of the adversary is to infer the information of secret input variables
by analyzing observations from executions.

Given a set of variables $X\subseteq \Var$, two states $s_1$ and $s_2$ are \emph{$X$-equivalent}, written as $s_1\simeq_X s_2$,
if for every scale variable $x\in X$, $s_1(x)=s_2(x)$ and for every array variable $x\in X$ and possible index $i\in \Nn$,
$s_1(x,i)=s_2(x,i)$. 
Two configurations $c_1$ and $c_2$ are \emph{$X$-equivalent}, written as $c_1\simeq_X c_2$,
if their states are \emph{$X$-equivalent}. 
For a pair of executions $\rho=c_0c_1\cdots c_n$ and $\rho'=c_0'c_1'\cdots c_{n'}'$, $\rho\simeq_X \rho'$ denotes that for every $0\leq i\leq \min(n,n')$, $c_i \simeq_X c_i'$.

\begin{definition}[Constant-time Security~\cite{ABBDE16}]
A safe program $\Prog$ is \emph{(constant-time) secure} if
for any pair of complete executions $\rho=c_0c_1\cdots c_n$ and $\rho'=c_0'c_1'\cdots c_{n'}'$, 
\begin{center}
 $(c_0\simeq_{\Var^{in}_l} c_0')\Rightarrow \obs(\rho)=\obs(\rho').$   
\end{center}
\end{definition}

Intuitively, the safe program $\Prog$ is secure if, for any pair of complete executions that have the same public input values (i.e., the values of public input variables), their observations are the same, meaning that 
secret inputs are not distinguishable from the observations.
Otherwise, there must exist a side-channel source $(\ell,x)$,
namely, the values of the variable $x$ at label $\ell$ differ between two configurations $c_i$ and $c_i'$ for some $i$. 
Thus, a potential timing side-channel source $(\ell,x)$ does not necessarily leak the secrets (i.e., $x$ is secret-independent), but leaks the secrets when $x$ is secret-dependent.
The security of unsafe programs is undefined, because they are stuck in the error state. 
(Note that the safety of a program can be verified using standard verification techniques and tools, e.g., SMACK~\cite{rakamaric2014smack}.) 

Standard library functions 
{\tt malloc}, {\tt free}, {\tt memcpy} and {\tt memset} may be used in cryptographic implementations.
To handle them, following~\cite{ABBDE16},
we assume that the address and the length used 
in those functions are observable to the adversary,
and the return of {\tt free} is secret-independent.

\section{Motivation and Overview}\label{sec:motivatiingexample}

In this section, we present two motivating examples and an overview of our approach.

\subsection{Motivating Examples}

\begin{figure}[t]
  \centering
\begin{lstlisting}
uint64_t fixfrac(char* frac) {
    uint64_t pow10_LUT[20] = {0x1999999999999999, ...,0x0000000000000000};
    uint64_t pow10_LUT_extra[20] = {0x99, ..., 0x2f};
    uint64_t result = 0;         uint64_t extra = 0;
    for(int i = 0; i < 20; i++) {
        if(frac[i] == '\0') {  break;  }
        uint8_t digit = (frac[i] - (uint8_t) '0');
        result += ((uint64_t) digit) * pow10_LUT[i];
        extra  += ((uint64_t) digit) * pow10_LUT_extra[i];
    }
    ... }
  \end{lstlisting}\vspace{-2mm}
  \caption{Fragment of the function {\tt fixfrac} taken from the libfixedtimefixedpoint library.}\vspace{-2mm}
  \label{fig:example-i32-sub}
\end{figure}

\smallskip
\noindent
{\bf Example 1}.
Figure~\ref{fig:example-i32-sub} shows
a fragment of the function {\tt fixfrac}, a fixed-point numeric operation 
provided by the library  libfixedtimefixedpoint~\cite{7163051}.
Given a digit string {\tt frac} whose length is no more than $20$, it computes a 64-bit number which corresponds to
$${\tt atoi(frac+padding)}/ 10^{20})  * 2^{64}$$
where {\tt frac+padding} is a digit string obtained from {\tt frac} by padding some 0's such that the length is $20$, and ${\tt atoi}$ coverts a digit string into the corresponding integer.
The function {\tt fixfrac} is invoked by the
function {\tt fix\_pow}$(x,y)$ which computes  $x^y$
over the fixed-point numbers $x$ and $y$.

There are five potential side-channel sources in this code snippet, i.e.,
$(6, {\tt frac[i]} == $`$\backslash 0$'$)$, 
$(6,{\tt i})$, $(7,{\tt i})$, $(8,{\tt i})$ and $(9,{\tt i})$.
We can observe that ${\tt i}$ is secret-independent, thus the last four pairs are not side-channel sources. However, it is non-trivial to determine the first potential side-channel source $(6, {\tt frac[i]} == $`$\backslash 0$'$)$ . If it is, 
the information of the secret inputs $x$ and $y$  can be inferred by the adversary via timing side-channels.

This example cannot be proved by
the self-composition based approach {\sf ct-verif}~\cite{ABBDE16}
unless the loop is unrolled or the following loop invariant is added:
\[\exists i_{\max } \cdot 0 \leq i<i_{\max } \leq 20 \wedge \operatorname{frac}\left[i_{\max }\right]==0 \wedge \forall j .0 \leq j \leq i_{\max } \Rightarrow \operatorname{public}(\operatorname{frac}[j]).\]
However, loops may be not statically bounded
and providing such loop invariants manually is non-trivial, which means that in practice, the self-composition based approach is not fully automatic. In contrast, taint analysis is useful here.
Indeed, 
{\tt frac} is secret-independent.

\begin{figure}[t]
  \centering 
\begin{lstlisting}
int crypto_stream_chacha20_ref(unsigned char *c, unsigned long long clen,
                          const unsigned char *n, const unsigned char *k){
    uint32_t ctx[16];
    chacha_keysetup(ctx, k);      // store k from ctx[0] to ctx[11];
    chacha_ivsetup(ctx, n, NULL); // store IV and counter into the rest;
    chacha_encrypt_bytes(ctx, c, c, clen);
}
static void chacha_encrypt_bytes(uint32_t *x, const u8 *m, u8 *c, 
                                 unsigned long long bytes){
    j12 = x[12];
    if (!j12) {...}
}

  \end{lstlisting}\vspace{-3mm}
  \caption{Simplified fragment of the function {\tt crypto\_stream\_chacha20\_ref} taken from 
  the libsodium library.}
  \label{fig:example-part-chacha20}\vspace{-2mm}
\end{figure}

\smallskip
\noindent
{\bf Example 2}.
Figure~\ref{fig:example-part-chacha20}
shows a simplified fragment of the function 
{\tt crypto\_stream\_chacha20\_ref} taken from the libsodium library, a portable, cross-compilable and installable fork of NaCl with an extended API to improve usability. 
This function implements the ChaCha20 stream cipher~\cite{bernstein2008chacha}.

In the fragment of {\tt crypto\_stream\_chacha20\_ref},
variable {\tt c} points to a plaintext to be encrypted,
{\tt clen} is the length of the plaintext, 
{\tt n} points to an initialization vector, 
and {\tt k} points to a private key.
The key {\tt k} is stored at the first 12 positions of the buffer {\tt ctx},  the initialization vector and a counter (initialized as NULL) are stored at the rest 4 positions of the buffer {\tt ctx}.

There is one potential side-channel sources in this simplified fragment, i.e., $(11,{\tt !j12})$.
It is non-trivial to determine 
this potential side-channel source, as the value of {\tt j12} is {\tt x[12]} while the buffer {\tt x} contains the secret key {\tt k}. If it is, the information of the secret key {\tt k} can be inferred by the adversary via timing side-channels.

This example cannot be proved by an index-insensitive taint analysis, because  
{\tt ctx} contains both secret-dependent and secret-independent contents, namely, {\tt ctx[0--11]} and
{\tt ctx[12--15]}.
Any index-insensitive taint analysis will conservatively taint the whole buffer.
In contrast, an index-sensitive taint analysis
(e.g., our precise taint analysis)  
would work in this example.

These examples reveal that  
static analysis (e.g., taint analysis) and self-composition based approaches may have complementary strengths even without efficiency considerations. Our method precisely takes advantage of their respective strength for which we provide an overview below. 


 \begin{figure}[t]
	\centering
	\includegraphics[width=1\columnwidth]{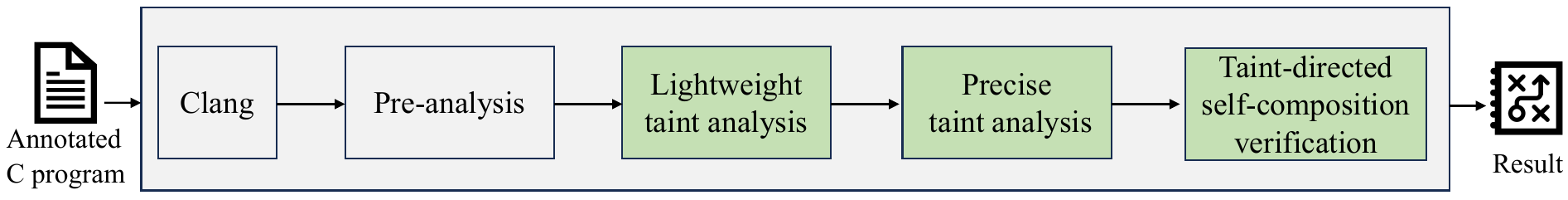}
	\vspace{-6mm}
       \caption{Overview of our approach}
\label{fig:overview}\vspace{-2mm}
\end{figure}

\subsection{Approach Overview}
An overview of our approach is shown in Figure~\ref{fig:overview}.  
In general, for a given annotated C program, the tool either outputs \emph{proved}, suggesting that the program is secure, or outputs
(potential) side-channel sources and corresponding execution traces for vulnerability localization and repair.

Our approach works as follows.
First, the input program is translated into LLVM intermediate representation (IR) with annotations using Clang, the front-end of LLVM. Second, we compute the call graph, interprocedural control-flow graph, points-to information and definition-use chains required by the subsequent steps via a pre-analysis.
Third, a lightweight taint analysis (cf.\ Section~\ref{sec:1sttaint}) is performed by leveraging the inter-procedural, finite, distributive, subset (IFDS)
framework~\cite{RHS95}. Often it is able to determine a large number of potential side-channel sources, leaving few potential side-channel sources unresolved. 
Finally, we resort to a precise taint analysis (cf.\ Section~\ref{sect:2ndtaint}) and a taint-directed self-composition verification (cf.\ Section~\ref{sec:cpverif}) to resolve the left-over potential side-channel sources.
The precise taint analysis reduces the flow-, context-, path-, field- and index-sensitive taint analysis problem to checking safety properties of a cross-product of the given program and its Boolean abstraction which tracks the required information flow from the secrets. 
The taint-directed self-composition consists of two copies of the original program which is a new variant of self-composition.  
Remarkably, the taint information is utilized in 
both cross-product constructions to simplify the resulting program and reduce
the cost of safety checks.  
By combining lightweight taint analysis and heavyweight safety verification, the overall approach brings the best of three worlds: efficiency, soundness and theoretical completeness. 

Consider the motivating examples.
The lightweight taint analysis is able to prove that all the five potential side-channel sources in Example 1 actually do not leak secrets, so the next two  
steps are not needed. In Example 2,
the lightweight taint analysis fails to determine
the potential side-channel source $(11,{\tt !j12})$, thus the subsequent analyses have to check  $(11,{\tt !j12})$, which can be resolved by the precise taint analysis.  (The final step is thus not needed.)
We remark that the precise taint analysis may fail to prove some constant-time implementations meaning that it is sound but incomplete. For instance, when the secret $k$ is involved in the computation of a potential timing side-channel source $x$ (e.g., 
$x=k\oplus p\oplus k$ where $\oplus$ is Exclusive-OR), and the value of $x$ is independent upon the secret $k$, the precise taint analysis will raise a false positive.

\section{Methodology}
\label{sec:method}

In this section, we present the details
of the three key components, i.e.,  
lightweight taint analysis,
precise taint analysis and taint-directed self-composition.

\subsection{Lightweight Taint Analysis}\label{sec:1sttaint}
In this subsection, we present a lightweight taint analysis  
which is designed to be flow-, field- and context-sensitive, but path-and index-insensitive, for a balance of efficiency and precision. 
Often it is able to prove that a large number of potential side-channel sources do not leak secrets, leaving few potential side-channel sources unresolved.

\smallskip
\noindent
{\bf Taint source}. 
Fix a safe program $\Prog$. The taint source  
is the set $\Var^{in}_h$ of its secret input variables, each of which
is a taint fact. We remark that although our implementation
supports the element-wise annotation of input array variables, 
this taint analysis is index-insensitive.
Thus, if any element of an input array variable is annotated by secret,
the array variable is regarded as secret, i.e., all the elements of the array are tainted.

\begin{figure}[t]
\renewcommand{\arraystretch}{2.2}
\setlength\FrameSep{1mm}
\begin{framed}
\centering
 \scalebox{0.85}{  
\begin{tabular}{rr}
  $\inference{T'=\big(y\in T\ ?\ T\cup \{x\}:T\setminus\{x\}\big)} {x:=y[z] \vdash T \hookrightarrow T'}[T-{\sc Assign}$_1$]$ &
  $\inference{T'=\big(\var(e)\cap T\neq \emptyset\ ?\ T\cup \{x\}:T\setminus 
  \{x\}\big)} {x:=e \vdash T \hookrightarrow T'}[T-{\sc Assign}$_3$]$ \\
 $\inference{T'=\big(z\in T\ ?\ T\cup \{x\}:T\big)} {x[y]:=z \vdash T \hookrightarrow T'}[T-{\sc Assign}$_2$]$&
 $\inference{p \mbox{ is } \sskip \mbox{ or } \sassert\ e \mbox{ or } \sassume\ e} {p \vdash T \hookrightarrow T}[T-{\sc Identity}]$    \\
 $\inference{p_1\vdash T \hookrightarrow T_1 & p_2\vdash T_1 \hookrightarrow T'} {p_1;p_2\vdash T \hookrightarrow T'}[T-{\sc Seq}]$ &
 $\inference{T'=\lfp(p,T)} {\swhile\ x\ \sdo\ p\ \sod \vdash T \hookrightarrow T'}[T-{\sc While}]$    \\
 & $\inference{p_1\vdash T \hookrightarrow T_1 & p_2\vdash T \hookrightarrow T_2} { \sif\ x\ \sthen\ p_1\ \selse\ p_2\ \sfi \vdash T \hookrightarrow T_1\cup T_2}[T-{\sc If}]$ 
\\	\specialrule{0em}{3pt}{3pt}
   \multicolumn{2}{r}{$\inference{\sdef\ f(x_1',\cdots,x_n')\{ p;\ \sreturn~y_1,\cdots, y_m;\} &
  T_{\tt in}=(T\setminus \Var_g)\cup \{x_i'\mid 1\leq i\leq n\wedge z_i\in T\} \\ 
   \mbox{caller}=g \quad
  p \vdash T_{\tt in} \hookrightarrow T_{\tt out}\quad
  T'= (T\setminus\{x_1,\cdots,x_m\}\cup T_{\tt out}\setminus\Var_f\cup\{x_i\mid 1\leq i\leq m\wedge y_i\in T_{\tt out}\}} {x_1,\cdots,x_m:=f(z_1,\cdots,z_n)  \vdash T \hookrightarrow T'}[{\sc T-Call-Ret}]$}\\
  \end{tabular}}  
  \end{framed}\vspace{-2mm}
  \caption{Taint inference rules for the {\while} language.}
  \label{fig:transfer}\vspace{-3mm}
\end{figure}

\smallskip
\noindent
{\bf Taint inference rule}. The taint inference rule is given
by the  transfer function of the form
\[p \vdash T \hookrightarrow T',\]
where $p$ is a statement, $T$ and $T'$ are sets of taint facts.
The transfer function $p \vdash T \hookrightarrow T'$ means that 
the execution of the statement $p$ with the set of taint facts
$T$ results in the set of taint facts $T'$.

The taint inference rules of the {\while} language are given in Figure~\ref{fig:transfer}, where
$\var(e)$ denotes the set of variables involved in the expression $e$, and $\lfp(p,T)$ is recursively defined as follows:
\[\lfp(p,T)=\left \{
\begin{array}{ll}
    T, &  \mbox{if}\ {p  \vdash T \hookrightarrow T};\\
     T\cup\lfp(p, T'), & \mbox{otherwise, where}\ {p  \vdash T \hookrightarrow T'}.
\end{array}\right.\]

These rules are standard, which, intuitively, propagate taints from the right-hand side variables to the left-hand side variable. 
For example, rule [T-{\sc Assign}$_1$] expresses that if the array $y$ is tainted, then the loaded array element is tainted. Rule [T-{\sc Assign}$_2$] expresses that if a tainted value is stored in an array, then the array is tainted. (Recall that our taint analysis is index-insensitive.) Rule [T-{\sc Assign}$_3$] expresses that if any involved variable of an expression $e$ is tainted, then the result of the expression $e$ is also tainted. Note that if the left-hand side is a scalar variable (i.e., rules [T-{\sc Assign}$_1$] and [T-{\sc Assign}$_3$]), we perform a strong update.
Rule [T-{\sc While}] computes the least fixed point by applying the operator $\lfp$ which always terminates, because
the sequence of sets of taint facts during $\lfp(p,T)$ is ascending w.r.t. the order $\subseteq$. 
Rule [{\sc T-Call-Ret}] is much involved. The set $T_{\tt in}$ of input taint facts at the call-site is obtained by
passing actual arguments of the caller $g$ to the formal parameters of the callee $f$ after filtering out the taint facts of
the local variables $\Var_g$ of the caller $g$. The body $p$ of the callee is analyzed
using the set $T_{\tt in}$ of input taint facts, leading to the set $T_{\tt out}$ of output taint facts.
The set $T_{\tt out}$ of output taint facts is merged with the set $T$ of taint facts at return-site after filtering out the taint facts of
the local variables $\Var_f$ of the callee $f$ and the actual return variables are updated accordingly.

\smallskip
\noindent
{\bf IFDS-based taint analysis}.
We leverage the inter-procedural, finite, distributive, subset (IFDS) framework to implement the lightweight taint analysis.
The time complexity and space complexity of the vanilla IFDS algorithm are ${\bf O}(|E|\cdot |D|^3)$ and  ${\bf O}(|E|\cdot|D|)$, respectively,
where $|E|$ is the number of edges in the interprocedural control-flow graph and the size $|D|$ of the domain is the number of all possible taint facts, i.e., $|\Var|$. The time complexity will increase sharply with $|\Var|$ in practice~\cite{li2021scaling} which
is significant for some cryptographic implementations due to the following reasons. 

On the one hand, typically, the input of a cryptographic algorithm consists of a key and plaintext; the key is secret and thus tainted,
and the key and plaintext are tightly coupled in the computation.
For instance, the AES algorithm has multiple modes, with the smallest key size being 128 bits (an array with 16 elements) and the smallest encryption process being 10 rounds each of which has four transformations. On the other hand, the SSA form introduces a number of temporary variables. Consequently, a large number of taint facts are propagated during the taint analysis.  
To mitigate this issue, inspired by the sparse data-flow analysis~\cite{OhHLLY12}, we improve the classic IFDS framework
by directly propagating taint facts of scalar variables
via data flow instead of control flow. More specifically, 
if a scalar variable $x$ is tainted, this taint fact is directly propagated to the statements where $x$ is used, using 
the def-use chains. It avoids the propagation of taint facts
of scalar variables for many statements, hence improving efficiency in practice.

Hereafter, for every label $\ell$ of a statement $p$, we denote by
$T_\ell$ the set of taint facts at label $\ell$ (i.e., before
the execution of the statement $p$), obtained by applying
the IFDS-based taint analysis. 
Since the IFDS-based taint analysis is sound, it is straightforward to have that 

\begin{lemma}\label{lem:lightweightTA}
For any potential side-channel source $(\ell,x)$ and pair $(c_0,c_0')$ of initial configurations such that $(c_0\simeq_{\Var^{in}_l} c_0')$, 
if $x$ is not tainted at the label $\ell$, i.e.,
$x\not\in T_\ell$, then the values of $x$ are the same at the label $\ell$ in any pair of complete executions $\rho=c_0c_1\cdots c_n$ and $\rho'=c_0'c_1'\cdots c_{n'}'$.
\end{lemma}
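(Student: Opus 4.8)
The plan is to prove Lemma~\ref{lem:lightweightTA} by establishing a soundness invariant that connects the statically computed taint sets $T_\ell$ with the dynamic equality of variable values along pairs of executions sharing public inputs. Concretely, I would show the following strengthened statement by induction on the length of the (joint) execution: for any pair of executions $\rho=c_0c_1\cdots$ and $\rho'=c_0'c_1'\cdots$ with $c_0\simeq_{\Var^{in}_l}c_0'$, if the two executions reach configurations $c_i=\langle s_i,p\rangle$ and $c_i'=\langle s_i',p\rangle$ at the \emph{same} program point labelled $\ell$, then for every location $l$ (scalar variable or array element) whose underlying variable is \emph{not} in $T_\ell$, we have $s_i(l)=s_i'(l)$. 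The lemma is then the special case $l=x$ for a potential side-channel source $(\ell,x)$ with $x\notin T_\ell$.

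First I would set up the induction carefully. Since the paper assumes SSA form and that programs terminate, and since Lemma~\ref{lem:lightweightTA} only concerns \emph{complete} executions, the two executions in question are safe and both reach \sskip; moreover, for the conclusion to be non-vacuous we only need to reason about prefixes up to the point where the relevant label $\ell$ is executed. A subtle point is that $\rho$ and $\rho'$ need not have the same control flow a priori --- that is exactly what is \emph{not} yet known at this stage of the pipeline. I would handle this by quantifying the invariant over "the $k$-th configuration of each execution that is at label $\ell$", or more robustly, by phrasing the induction over a single symbolic execution of the transfer-function semantics and appealing to the fact that $T_\ell$ is, by construction of the IFDS solution, an over-approximation of the union of taint sets reaching $\ell$ over \emph{all} paths. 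In other words, the key structural fact I would extract from Section~\ref{sec:1sttaint} is: for every statement $\ell:p$ that can execute, and every incoming taint set $T$ with $T\subseteq T_\ell$, the transfer rule $p\vdash T\hookrightarrow T'$ yields $T'\subseteq T_{\ell'}$ for every successor label $\ell'$; this is precisely the soundness/fixed-point property of the IFDS solution (including the merge at join points via [T-{\sc If}], the fixed point via [T-{\sc While}], and the call/return handling via [{\sc T-Call-Ret}]).

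Next I would discharge the inductive step by a case analysis on the statement $p$ being executed at $\ell$, matching each operational-semantics rule in Figure~\ref{fig:semantics} against the corresponding taint rule in Figure~\ref{fig:transfer}. For an assignment $x:=e$ (rule {\sc Assign}$_1$ / [T-{\sc Assign}$_3$]): if $x\notin T_{\ell'}$ then either $x$ was already untainted and $\var(e)\cap T_\ell=\emptyset$, so by IH every variable of $e$ has equal value in $s$ and $s'$, hence $s(e)=s'(e)$ (using that $\odot$ is deterministic), so the updated states agree on $x$; variables other than $x$ are untouched. The array-load $x:=y[z]$ (rule [T-{\sc Assign}$_1$]) needs the extra observation that $x\notin T_{\ell'}$ forces $y\notin T_\ell$, and since the taint analysis is index-insensitive, $y\notin T_\ell$ means \emph{every} element of $y$ is untainted, so in particular $s(y,s(z))=s'(y,s'(z))$ holds regardless of whether the indices $z$ agree --- here I would note that actually it suffices that $y\notin T_\ell$ gives equality of all elements of $y$. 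The array-store $x[y]:=z$ (rule [T-{\sc Assign}$_2$]) uses that if $x$ stays untainted then $z\notin T_\ell$, so the stored value agrees. The {\sc Skip}, {\sc Assert}, {\sc Assume}, {\sc Seq} cases are immediate from [T-{\sc Identity}] / [T-{\sc Seq}] and do not change the state (for assert/assume, safety of the executions guarantees we do not go to $\bot$). The branch and loop cases ({\sc If}$_t$/{\sc If}$_f$, {\sc While}$_t$/{\sc While}$_f$) do not modify the state at all; the only wrinkle is that the two executions might take different branches, but this is exactly why the invariant must be stated so that $T_{\ell'}$ for each successor $\ell'$ over-approximates taints along \emph{that} successor, and why we only claim value-equality at labels that \emph{both} executions reach. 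The procedure call/return case ({\sc Call-Ret} vs.\ [{\sc T-Call-Ret}]) is the most bureaucratic: I would check that the parameter-passing $s_{\tt in}=s[x_i'\mapsto s(z_i)]$ is mirrored by $T_{\tt in}=(T\setminus\Var_g)\cup\{x_i'\mid z_i\in T\}$ so the IH transfers into the callee body, then that the return-value assignment $s_{\tt ret}=s[x_j\mapsto s_{\tt out}(y_j)]$ is mirrored by the update to $\{x_j\mid y_j\in T_{\tt out}\}$, and that local variables being filtered out ($\setminus\Var_f$, $\setminus\Var_g$) is harmless since SSA and distinct identifiers prevent aliasing of local names across scopes.

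I expect the main obstacle to be the control-flow mismatch issue: formulating the invariant so that it is simultaneously (i) strong enough to be preserved inductively through branches and loops whose guards may differ between $\rho$ and $\rho'$, and (ii) a faithful consequence of what the IFDS algorithm actually computes (the per-label fixed-point solution), rather than of a hypothetical "same-path" analysis. The cleanest route, which I would adopt, is to prove the one-execution statement "for every reachable $\langle s,\ell{:}p\rangle$, the set of taint facts computed by the abstract semantics along the path taken is $\subseteq T_\ell$" as a separate, purely static lemma (essentially re-deriving soundness of the IFDS solution from the transfer rules in Figure~\ref{fig:transfer}), and then run the two-execution value-equality induction on top of it. A secondary, smaller obstacle is the array-index subtlety in [T-{\sc Assign}$_1$]: one must resist the temptation to reason about the value of the index $z$ (which may well be tainted) and instead observe that untaintedness of the array variable $y$ already delivers element-wise equality. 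With these two points handled, the remaining cases are routine matchings of semantic rules to taint rules.
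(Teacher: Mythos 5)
Your overall strategy---a two-run non-interference induction that re-derives the soundness of the taint rules from first principles---is a legitimate and in fact more honest route than the paper's own proof, which simply argues by contraposition (``if the values of $x$ differ, $x$ depends on a secret'') and then cites the soundness of the IFDS framework and of sparse analysis as a black box. However, your strengthened invariant is not inductively preserved, and the failure is not a bookkeeping issue but a counterexample to the invariant itself. The taint rules in Figure~\ref{fig:transfer} track only explicit data flow: in $\sif\ h\ \sthen\ x:=1\ \selse\ x:=2\ \sfi$ with $h$ secret, rule [T-{\sc Assign}$_3$] \emph{untaints} $x$ in both branches (since $\var(1)\cap T=\emptyset$), and [T-{\sc If}] merely unions the results, so $x\not\in T_{\ell}$ at the join point even though the two runs reach that join point with different values of $x$. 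Your proposed repairs (indexing by the $k$-th visit to $\ell$, or factoring out a single-run static soundness lemma) do not touch this implicit-flow leak: the join point is reached by both executions, yet the location-wise equality you claim there is false. The induction can only be closed under the additional hypothesis that every branch/loop guard governing the path to $\ell$ is itself untainted---which is exactly what forces the two runs into lockstep, and is how the lemma is actually consumed in Theorem~\ref{thm:maintheorem}, where every such guard is a potential side-channel source required to be untainted. Without making that side condition explicit (or without strengthening the analysis to taint variables assigned under tainted guards), the inductive step at {\sc If}/{\sc While} fails.

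A second, independent error sits in your array-load case. For $x:=y[z]$ with $y\not\in T_\ell$ you argue that element-wise cross-run equality of $y$ gives $s(y,s(z))=s'(y,s'(z))$ ``regardless of whether the indices $z$ agree.'' It does not: knowing $s(y,k)=s'(y,k)$ for every $k$ says nothing about $s(y,i)$ versus $s'(y,j)$ when $i\neq j$. If $y$ is a public lookup table and $z$ is tainted, the two runs load different entries, yet [T-{\sc Assign}$_1$] marks $x$ untainted because only $y$'s taint is consulted. Again the global pipeline is rescued because $(\ell,z)$ is itself a potential side-channel source that gets flagged, but your local invariant---and the lemma read in isolation---is violated. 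Both gaps point to the same missing ingredient: the induction must carry, as a hypothesis, that all control-flow decisions and all load/store indices encountered before $\ell$ are secret-independent, so that the two executions traverse identical configurations up to renaming of secret-derived values. Once you add that hypothesis the case analysis you sketch (including the call/return bookkeeping) goes through; without it, two of your cases are proving false statements.
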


\begin{proof}[Proof sketch]
For a pair of complete executions $\rho=c_0c_1\cdots c_n$ and $\rho'=c_0'c_1'\cdots c_{n'}'$ with $c_0\simeq_{\Var^{in}_l} c_0'$, suppose that the values of $x$ are different at the label $\ell$. It suffices to show that $x\in T_\ell$, i.e., $x$ is tainted at $\ell$. 
 
Note that the values of public input variables are the same in 
the initial configurations $c_0$ and $c_0'$. We then can deduce that the value of the variable $x$ at the label  $\ell$ depends upon some secret input variables from $\Var^{in}_h$. 
By the soundness of the IFDS framework~\cite{RHS95} and sparse static analysis~\cite{OhHLLY12}, $x$ must be tainted at  $\ell$. 
\end{proof}

Obviously, if for each potential side-channel source $(\ell,x)$,
the variable $x$ is not tainted at the label $\ell$, we can deduce that the safe program is constant-time secure.
 

\subsection{Precise Taint Analysis via Taint-directed Semi-cross-product} \label{sect:2ndtaint}
While the lightweight taint analysis
is often effective in ruling out a large number of potential side-channel sources to be genuinely vulnerable, 
some can still not be determined 
due to the over-approximation nature of the static analysis, e.g., index-insensitive.
In this subsection,  
we propose a precise taint analysis, which reduces the flow-, context-, path-, field- and index-sensitive taint analysis problem to checking safety properties of a novel cross-product, called \emph{taint-directed semi-cross-product}. We shall first explain the intuition and then present
the formal construction.  

\smallskip
\noindent
{\bf Intuition}. Given a program $\Prog$, we construct a 
semi-cross-product $\Prog'$ of the program $\Prog$ and its Boolean abstraction.  
Here, `semi' means that
one copy in the cross-product $\Prog'$ is a Boolean abstraction of $\Prog$ instead of the original one; `cross' means that $\Prog'$ shares the same control flow of $\Prog$ and executes statements of two copies in a lockstep manner. 

The Boolean abstraction has 
\begin{itemize}
    \item a Boolean variable $b_x$ for each scalar variable $x$ in  $\Prog$ such that $b_x=1$ iff $x$ is tainted. 
    \item  a Boolean array $b_x$ for each array variable $x$ in $\Prog$ such that $b_x[i]=1$ iff $x[i]$ is tainted.  
\end{itemize}

The precise taint analysis is to determine whether a scalar variable $x$ (resp. 
an array element $x[i]$) is tainted or not. For this purpose, it suffices to check
whether there exist inputs to $\Prog'$ such that $b_x$ (resp. $b_x[i]$) is $1$, which is a standard safety verification problem.

To reduce the cost of safety verification, we incorporate the results from the lightweight taint analysis into 
the semi-cross-product $\Prog'$ based on the following observation. The lightweight taint analysis is conservative (i.e., intuitively it may overly taint), consequently, a more precise taint analysis would not taint the variables 
that have not been tainted by the lightweight taint analysis. Hence
if a variable $x$ has not been tainted by the lightweight taint analysis, 
its Boolean abstraction $b_x$ in the semi-cross-product  $\Prog'$ will always be $0$.
As a result, when verifying $\Prog'$, it suffices to focus exclusively on the variables that have been tainted by the lightweight taint analysis. 
For the variables that have not been tainted by the lightweight taint analysis, 
we can simply assign $0$ to them, which can further improve the efficiency of safety verification.

\smallskip
\noindent
{\bf Product construction}.
The semi-cross-product $\Prog'$
for a given program $\Prog$ is constructed by iteratively applying
the function $\pi(\cdot)$ (given in Fig.~\ref{fig:semi-product}) to each procedure of $\Prog$. For each procedure,  formal parameters and return variables are duplicated by the function $\pi(\cdot)$ using their Boolean abstractions, and the procedure body $p$ is replaced by 
$\pi(p)$. 
Moreover, the Boolean abstractions of public and secret inputs are respectively initialized by $0$ and $1$ at the beginning of the \smain\ procedure.

During the construction, $\xi(e)$ is used to generate the Boolean abstraction of the expression $e$ that computes the taint value of the result of the expression $e$. 
Specifically, 
if $e$ is a constant $n$, $\xi(e)$ is the Boolean constant $0$;
if $e$ a variable $x$, $\xi(e)$ is the Boolean abstraction of $x$ (i.e., $b_x$);
and if $e$ is a compound expression $e_1 \odot e_2$, $\xi(e)$ is 
the disjunction $\xi(e_1)\vee \xi(e_2)$ of taint values of sub-expressions, namely, the value of $e$ is tainted if some variable used in $e$ is tainted.

\begin{figure}[t]
\centering\setlength{\tabcolsep}{7pt}
\setlength\FrameSep{1mm}
\begin{framed}
\renewcommand{\arraystretch}{1.1}
 \scalebox{0.85}{
\noindent \begin{tabular}{lll}
$\xi(n)\triangleq 0$ & \qquad\qquad\qquad\qquad\qquad
$\xi(x)\triangleq b_x$\qquad\qquad\qquad\qquad  &
$\xi(e_1 \odot e_2)\triangleq \xi(e_1)\vee \xi(e_2)$  \\
\multicolumn{2}{l}{$\pi(p)\triangleq p$ if $p$ is $\sskip \mbox{ or } \sassert\ e \mbox{ or } \sassume\ e$} &  $\pi(p_1;\ p_2)\triangleq \pi(p_1);\ \pi(p_2)$ \\
\multicolumn{3}{l}{$\pi(\ell:\ x:=e)\triangleq 
\left\{\begin{array}{ll}
x:=e;\ b_x:=0 & \mbox{if } \var(e)\cap T_{\ell}=\emptyset  \\
x:=e;\ b_x:= \xi(e) &  \mbox{otherwise}
\end{array}  \right.$}\\
\multicolumn{3}{l}{$\pi(\ell:\ x:=y[z])\triangleq 
\left\{\begin{array}{ll}
\Guard_\ell(z);\ x:=y[z];\ b_x:=0 & \mbox{if } y \not\in T_{\ell} \\
\Guard_\ell(z);\ x:=y[z];\ b_x:=b_y[z] &  \mbox{otherwise}\\
\end{array}  \right.$} \\ 
\multicolumn{3}{l}{$\pi(\ell:\ x[y]:=z)\triangleq 
\left\{\begin{array}{ll}
\Guard_\ell(y);\ x[y]:=z;\ b_x[y]:=0 & \mbox{if }  z\not\in T_\ell  \\
\Guard_\ell(y);\ x[y]:=z;\ b_x[y]:=b_z&  \mbox{otherwise}
\end{array}  \right.$} \\ 
\multicolumn{3}{l}{$\pi(\ell:\ \swhile\ x\ \sdo\ p\ \sod)\triangleq \  
\swhile\ x @{\tt INV} \ \sdo\ \Guard_{\bbegin(\ell)}(x);\ \pi(p)\  \sod; \Guard_{\eexit(\ell)}(x)$} \\
\multicolumn{3}{l}{$\pi(\ell:\ \sif\ x\ \sthen\ p_1\ \selse\ p_2\ \sfi)\triangleq \Guard_{\ell}(x);\ \sif\ x\ \sthen\ \pi(p_1)\ \selse\  \pi(p_2)\ \sfi$}\\
\multicolumn{3}{l}{$\pi(x_1,\cdots,x_m:=f(y_1,\cdots,y_n))\triangleq x_1,b_{x_1},\cdots,x_m,b_{x_m}:=f(y_1,b_{y_1},\cdots,y_n,b_{y_n})$}\\
\multicolumn{3}{l}{$\pi(\sdef\ f(x_1,\cdots,x_n)\{ p;\ \sreturn~y_1,\cdots, y_m;\})= \sdef\ f(x_1,b_{x_1}\cdots,x_n,b_{x_n})\{ \pi(p);\ \sreturn~y_1,b_{y_1},\cdots, y_m,b_{y_m};\}$}\\ 
\multicolumn{3}{l}{$\Guard_\ell(x)\triangleq (x\in T_\ell\ ?\ \sassert\ \neg b_x : \sskip)$} \\
\end{tabular}}
\end{framed} 
  \caption{The taint-directed semi-cross-product of the \while programs, where $\bbegin(\ell)$ 
  and $\eexit(\ell)$ denote the labels of the beginning and exit of the loop body of the loop at the label $\ell$.}
  \label{fig:semi-product}
\end{figure}

For each statement $p$ with label $\ell$, $\pi(p)$ produces a new statement. 
If $p$ is a \sskip\ or \sassert\ or \sassume\ statement, $\pi(p)$ gives $p$ itself, namely, it does not have a Boolean abstraction.
If $p$ is a sequential statement $p_1;\ p_2$, $\pi(p)$ gives
the sequential statement $\pi(p_1);\ \pi(p_2)$ by recursively applying
the function $\pi(\cdot)$. If $p$ is a standard assignment $x:=e$, $\pi(p)$ is defined to track the information flow from the operands of the expression $e$ to the Boolean abstraction $b_x$ of $x$ if some operand of $e$ has been tainted by the lightweight taint analysis, otherwise $0$.

If $p$ is a load statement $x:=y[z]$, $\pi(p)$ gives a sequential 
statement $\Guard_\ell(z);\ x:=y[z];\ b_x:=0$ if $y$ has not been tainted (i.e., $y\not\in T_{\ell}$). Otherwise, $\Guard_\ell(z);\ x:=y[z];\ b_x:=b_y[z]$ is generated meaning that $b_x$ is the same as $b_y[z]$.
The auxiliary function $\Guard_\ell(z)$ is used to generate an \sassert\ statement to resolve the taint status of the variable $z$ if $z$ has been tainted, because $(\ell,z)$
is a potential side-channel source. If $z$ has not been tainted, the assertion is avoided by replacing it with a \sskip\ statement which can be further removed from the program $\Prog'$.
A store statement $x[y]:=z$ is handled similarly,
except that 
$b_y=0$ is checked by inserting an \sassert\ statement if $y$ has been tainted (i.e., $y\in T_\ell$)
and $b_x[y]$ is updated accordingly.

If $p$ is a \swhile-\sdo\ statement $\swhile\ x\ \sdo\ p\ \sod$,
$\pi(p)$ inserts an \sassert\ statement at the beginning (resp. exit) of the loop body to ensure that $b_x=0$ if $x$ has been tainted at the beginning (resp. exit) of the loop body. Furthermore, to facilitate safety verification, loop invariants {\tt INV} are inserted. Currently, we use the following heuristic strategy to generate loop invariants and leave the generation of more effective loop invariants as future work. For each variable $y$ defined in
the loop body, if it is used to compute the loop condition $x$, then the predicate $\neg b_y$ is added as a loop invariant. However, such loop invariants may be invalid in practice. Loop invariants are checked and invalid ones are removed during safety verification.

If $p$ is an \sif-\sthen-\selse\ statement $\sif\ x\ \sthen\ p_1\ \selse\ p_2\ \sfi$, similarly, $\pi(p)$ inserts an \sassert\ statement to resolve the taint status of the branching condition $x$ if it has been
tainted. 

For every potential side-channel source $(\ell,x)$
such that $x\in T_\ell$, the program $\Prog'$ must have
one or two \sassert\ statements introduced by \Guard which checks
if $b_x=0$. 
By applying a sound safety verifier, 
we can check if such \sassert\ statements are valid or not.
If they are valid, we can deduce that $(\ell,x)$ is not a side-channel source, thus 
the variable $x$ can be removed from the set of taint facts $T_\ell$.
We denote by $T_\ell'$ the resulting set of taint facts.
It is trivial to see that Lemma~\ref{lem:lightweightTA} still holds
when $T_\ell$ is updated by $T_\ell'$ for every potential side-channel source $(\ell,x)$.

\begin{lemma}\label{lem:preciseTA}
For any potential side-channel source $(\ell,x)$ and pair $(c_0,c_0')$ of initial configurations such that $(c_0\simeq_{\Var^{in}_l} c_0')$,  
if $x\not\in T_\ell'$, then the values of $x$ are the same at the label $\ell$ in any pair of complete executions $\rho=c_0c_1\cdots c_n$ and $\rho'=c_0'c_1'\cdots c_{n'}'$.
\end{lemma}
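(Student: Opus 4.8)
The plan is to reduce the statement to Lemma~\ref{lem:lightweightTA} together with a soundness argument for the taint-directed semi-cross-product $\Prog'$. Fix a potential side-channel source $(\ell,x)$ and initial configurations with $c_0\simeq_{\Var^{in}_l}c_0'$, and assume $x\notin T_\ell'$. Since $T_\ell'\subseteq T_\ell$, and a variable is dropped from $T_\ell$ when forming $T_\ell'$ only if the safety verifier proved valid the \sassert\ statement that $\Guard_\ell$ emits for that source (the one checking $\neg b_x$), I would split into two cases. If $x\notin T_\ell$ the claim is exactly Lemma~\ref{lem:lightweightTA} and there is nothing to do. The remaining case is $x\in T_\ell\setminus T_\ell'$: here the verifier guarantees that every execution of $\Prog'$ has $b_x=0$ whenever control is at $\ell$, and the task is to transport this property of $\Prog'$ back to $\Prog$.

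The transport step is a soundness-of-abstraction argument by induction on executions, with two ingredients. First I would prove a \emph{simulation} claim: every execution $\rho=c_0c_1\cdots$ of $\Prog$ lifts to an execution of $\Prog'$ whose projection onto the original variables is $\rho$, because $\xi(\cdot)$ and the Boolean updates touch only the fresh Boolean variables, the \Guard\ guards can only block execution (never rewrite the original state), and the annotations {\tt INV} carry no semantics. Then I would establish a \emph{taint-soundness invariant} relating the lifted $\Prog'$-execution to any $\Prog$-execution $\rho'$ with $c_0\simeq_{\Var^{in}_l}c_0'$ that has so far executed the same sequence of labelled statements as $\rho$: for every live location $l$, if the current $\Prog'$-state has $b_l=0$, then $l$ has the same value in the matched states of $\rho$ and $\rho'$. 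The base case is immediate, since on entering \smain\ the Boolean abstractions of public inputs are $0$, the two states agree on $\Var^{in}_l$, and nothing is claimed about secrets (whose abstractions are $1$). The inductive step is a rule-by-rule check against Figure~\ref{fig:semi-product}: the assignment $x:=e$ is representative --- $b_x$ becomes $\xi(e)=\bigvee_{y\in\var(e)}b_y$ (or $0$ when $\var(e)\cap T_\ell=\emptyset$, in which case $e$ already evaluates equally by Lemma~\ref{lem:lightweightTA}), so $b_x=0$ forces all operands of $e$, hence $x$, to agree; loads, stores, branches and loops additionally rely on the $\Guard_\ell(\cdot)$ guards, which force the Boolean of a tracked index or condition to $0$, so that the accessed index (resp.\ the branch/loop condition) is equal in the two runs --- which is exactly what keeps the ``same labelled statements'' hypothesis alive across control points. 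Applying the invariant at $\ell$, where $b_x=0$ by the verifier's guarantee, shows that $x$ agrees in $\rho$ and any same-control-flow $\rho'$ at $\ell$; and, as in the proof of Lemma~\ref{lem:lightweightTA}, two $\Var^{in}_l$-equivalent executions whose first disagreement in $x$ occurs at $\ell$ must have matched control flow up to $\ell$, so the conclusion follows.

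The hard part will be the taint-soundness invariant. The ``matching'' between the two $\Prog$-executions and the lifted $\Prog'$-execution cannot be indexed by step count once loops are present, so it must be phrased via matched occurrences of labelled statements (one $\Prog$-step mirrored by one original statement together with its Boolean updates and guards in $\Prog'$); the array rules must be handled cell-by-cell, using $\Guard_\ell$ to know the accessed index coincides in both runs and tracking which cell's Boolean was last written; and one must check that the heuristically generated, verifier-validated loop invariants {\tt INV} are genuinely sound (they are validated during verification), so that the verifier's guarantee about $b_x$ at $\ell$ does not rest on an unsound assumption. Once the invariant is in place, the simulation claim and the final transport to $\Prog$ are routine and essentially re-run the reasoning behind Lemma~\ref{lem:lightweightTA}.
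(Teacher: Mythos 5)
Your proof is correct and rests on the same key fact as the paper's own argument: the soundness of the Boolean abstraction, i.e., that $b_x$ can be $0$ at $\ell$ only when the value of $x$ there is independent of the secret inputs. The paper establishes the contrapositive in a short sketch (if the values of $x$ differ across the two runs, then $x$ depends on a secret, so $b_x=1$ and the assertion $\sassert\ \neg b_x$ cannot have been proved valid, hence $x$ stays in $T_\ell'$), whereas you argue the direct implication and expand that same soundness fact into an explicit simulation-plus-relational-induction — a more detailed rendering of the same route rather than a genuinely different one.
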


\begin{proof}[Proof sketch]
 Suppose the values of $x$ are different at the label $\ell$ in a pair of complete executions $\rho=c_0c_1\cdots c_n$ and $\rho'=c_0'c_1'\cdots c_{n'}'$ with $c_0\simeq_{\Var^{in}_l} c_0'$. Let $b_x$ be the Boolean abstraction of $x$.
We show that the assert statement $\sassert\ \neg b_x$
for the potential side-channel source $(\ell,x)$ is not valid. 

As the values of $x$ are different at $\ell$ in $\rho$ and $\rho'$, and
the values of public input variables are the same in 
the initial configurations $c_0$ and $c_0'$, we can deduce that the value of $x$ at $\ell$ depends upon some secret input variables from $\Var^{in}_h$. 
Since the Boolean abstractions of all the secret input variables
are initialized by $1$, the Boolean abstraction of each internal variable is set to $0$ only if the internal variable is independent of secret input variables, and 
Boolean abstractions can only be disjunction, we conclude that $b_x$
is $1$ when the semi-cross-product $\Prog'$ takes the inputs from $c_0$ or $c_0'$. 
\end{proof}


\subsection{Taint-directed Self-composition}\label{sec:cpverif}
The precise taint analysis can resolve potential side-channel sources that were left by the lightweight taint analysis, but may still fail on some potential side-channel sources. Indeed, it cannot determine genuine side-channel sources that would leak secrets.
Thus, in this subsection, we 
propose a taint-directed self-composition which 
improves the original construction~\cite{ABBDE16} by incorporating the taint information. Our construction simplifies the self-composed programs and reduces the number of safety checks. 
We first describe the intuition and then present the formal construction.

\smallskip
\noindent
{\bf Intuition}. Given a program $\Prog$, we construct a 
cross-product $\hat{\Prog}$ 
of the program $\Prog$ comprising two copies of $\Prog$
which share the same control flow.
One copy is the \emph{original} program, and the other copy is referred to as the \emph{shadow} one which has a shadow variable $\hat{x}$ for each variable $x$ in $\Prog$.
Typically we check whether the variable $x$ and 
its shadow $\hat{x}$ have the same values when the original and shadow counterparts of $\hat{\Prog}$ are provided with the same
public inputs but different secret inputs.

To reduce the cost of safety verification, we also incorporate the results from two taint analyses into $\hat{\Prog}$ based on the following observation. If a variable $x$ has not been tainted, 
$x$ and 
its shadow $\hat{x}$
in the product $\hat{\Prog}$ must have the same values when the public inputs of the two copies are the same.
Therefore, when verifying $\hat{\Prog}$, it suffices to focus exclusively on the variables that are still tainted after two taint analyses. Moreover,
for those variables that have not been tainted, 
we can simply assign the value of the original variable to the shadow one instead of copying the computation, which can further improve the efficiency of safety verification.

\smallskip
\noindent
{\bf Product construction}. 
The cross-product $\hat{\Prog}$
for a given program $\Prog$ is constructed by iteratively applying
the function $\Pi(\cdot)$ (given in Fig.~\ref{fig:product}) to each procedure of $\Prog$. 
Moreover, the shadow counterparts of public inputs are initialized by their original ones at the beginning of the \smain\ procedure.
For each procedure, $\Pi(\cdot)$ is defined similar to
the function $\pi(\cdot)$ in semi-cross-product,
where the Boolean abstractions are replaced
by the shadow counterparts.

\begin{figure}[t]
\centering\setlength{\tabcolsep}{7pt}
\setlength\FrameSep{1mm}
\begin{framed}
\renewcommand{\arraystretch}{1.1}
 \scalebox{0.85}{
\noindent \begin{tabular}{lll}
$\Xi(n)\triangleq n$ & \qquad\qquad\qquad\qquad\qquad
$\Xi(x)\triangleq \hat{x}$\qquad\qquad\qquad\qquad  &
$\Xi(e_1 \odot e_2)\triangleq \Xi(e_1)\vee \Xi(e_2)$  \\
\multicolumn{2}{l}{$\Pi(p)\triangleq p$ if $p$ is $\sskip \mbox{ or } \sassert\ e \mbox{ or } \sassume\ e$} &  $\Pi(p_1;\ p_2)\triangleq \Pi(p_1);\ \Pi(p_2)$ \\
\multicolumn{3}{l}{$\Pi(\ell:\ x:=y[z])\triangleq 
\left\{\begin{array}{ll}
\Guard_\ell'(z);\ x:=y[z];\ \hat{x}:=x & \mbox{if } y \not\in T'_{\ell} \\
\Guard_\ell'(z);\ x:=y[z];\ \hat{x}:=\hat{y}[\hat{z}] &  \mbox{otherwise}\\
\end{array}  \right.$} \\ 
\multicolumn{3}{l}{$\Pi(\ell:\ x[y]:=z)\triangleq 
\left\{\begin{array}{ll}
\Guard_\ell'(y);\ x[y]:=z;\ \hat{x}[\hat{y}]:=x[y] & \mbox{if } z\not\in T'_\ell  \\
\Guard_\ell'(y);\ x[y]:=z;\ \hat{x}[\hat{y}]:=\hat{z}&  \mbox{otherwise}
\end{array}  \right.$} \\ 
\multicolumn{3}{l}{$\Pi(\ell:\ x:=e)\triangleq 
\left\{\begin{array}{ll}
x:=e;\ \hat{x}:=x & \mbox{if } \var(e)\cap  T_{\ell}'=\emptyset  \\
x:=e;\  \hat{x}:= \Xi(e) &  \mbox{otherwise}
\end{array}  \right.$}\\
\multicolumn{3}{l}{$\Pi(\ell:\ \swhile\ x\ \sdo\ p\ \sod)\triangleq \  
\swhile\ x @{\tt INV'} \ \sdo\ \Guard'_{\bbegin(\ell)}(x);\ \Pi(p)\  \sod; \Guard'_{\eexit(\ell)}(x)$} \\
\multicolumn{3}{l}{$\Pi(\ell:\ \sif\ x\ \sthen\ p_1\ \selse\ p_2\ \sfi)\triangleq \Guard'_{\ell}(x);\ \sif\ x\ \sthen\ \Pi(p_1)\ \selse\  \Pi(p_2)\ \sfi$}\\
\multicolumn{3}{l}{$\Pi(x_1,\cdots,x_m:=f(y_1,\cdots,y_n))\triangleq x_1,\hat{x}_1,\cdots,x_m,\hat{x}_m:=f(y_1,\hat{y}_1,\cdots,y_n,\hat{y}_n)$}\\
\multicolumn{3}{l}{$\Pi(\sdef\ f(x_1,\cdots,x_n)\{ p;\ \sreturn~y_1,\cdots, y_m;\})= \sdef\ f(x_1,\hat{x}_1\cdots,x_n,\hat{x}_n)\{ \Pi(p);\ \sreturn~y_1,\hat{y}_1,\cdots, y_m,\hat{y}_m;\}$}\\ 
\multicolumn{3}{l}{$\Guard'_\ell(x)\triangleq (x\in T_\ell'\ ?\ \sassert\ x=\hat{x} : \sskip)$} \\
\end{tabular}}
\end{framed} \vspace{-2mm}
  \caption{The taint-directed cross-product of the \while programs.} 
  \label{fig:product}\vspace{-2mm}
\end{figure}

More specifically, the auxiliary function $\Xi(e)$
used in defining $\Pi(\ell:\ x:=e)$ now generates a shadow expression that computes the value of expression $e$ over shadow variables. 
Concretely, if $e$ is a constant $n$, $\Xi(e)$ is the constant $n$; 
if $e$ is a variable $x$, $\Xi(e)$
is the shadow variable $\hat{x}$ of $x$;
and if $e$ is a compound expression $e_1 \odot e_2$, $\Xi(e)$ is $\Xi(e_1)\odot \Xi(e_2)$. 

The auxiliary function $\Guard'_\ell(x)$ 
for each left potential side-channel source $(\ell,x)$ generates an \sassert\ statement to ensure that the variable $x$ and its shadow counterpart have
the same value (i.e., $x=\hat{x}$)
if $x$ has been tainted. It allows us to resolve the taint status of $x$.

The duplicated Boolean counterpart $b_x:=0$ (resp.  
$b_x[y]:=0$) is replaced by $\hat{x}:=x$ (resp. $\hat{x}[\hat{y}]:=x[y]$) in $\Pi(\ell:\ x:=y[z])$
(resp. $\Pi(\ell:\ x[y]:=z)$) if $x$ was untainted,
otherwise the right-hand side is copied using
the corresponding shadow counterparts.

For each $\swhile\ x\ \sdo\ p\ \sod$,
$\Pi(p)$ inserts loop invariants {\tt INV}. 
Following~\cite{ABBDE16}, for each variable $y$ defined in
the loop body, if it is used to compute the loop condition $x$, then the predicate $y=\hat{y}$ is added as a loop invariant. Similar to
the precise taint analysis, such loop invariants
are checked and invalid ones are removed during
safety verification.

For every potential side-channel source $(\ell,x)$
such that $x\in T_\ell'$, the program $\hat{\Prog}$ must have
one or two \sassert\ statements introduced by $\Guard'$ which checks
if $x=\hat{x}$. 
By invoking safety verifiers, 
we can check if such \sassert\ statements are valid or not.
If they are valid, we can deduce that $(\ell,x)$ is not a side-channel source, thus 
the variable $x$ can be removed from the set of taint facts $T_\ell'$.
We denote by $\hat{T}_\ell$ the resulting set of taint facts.
It is trivial to see that Lemma~\ref{lem:lightweightTA} still holds
when $T_\ell$ is updated by $\hat{T}_\ell$ for every potential side-channel source $(\ell,x)$.

\begin{lemma}\label{lem:product}
For any potential side-channel source $(\ell,x)$ and pairs $(c_0,c_0')$ of initial configurations such that $(c_0\simeq_{\Var^{in}_l} c_0')$,  
if $x\not\in \hat{T}_\ell$, then the values of $x$ are the same at the label $\ell$ in any pair of complete executions $\rho=c_0c_1\cdots c_n$ and $\rho'=c_0'c_1'\cdots c_{n'}'$.
\end{lemma}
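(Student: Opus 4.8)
The plan is to follow the same two-step structure used in the proof sketches of Lemma~\ref{lem:lightweightTA} and Lemma~\ref{lem:preciseTA}, arguing by contradiction. Assume $(\ell,x)$ is a potential side-channel source, $c_0\simeq_{\Var^{in}_l}c_0'$, and that the values of $x$ at label $\ell$ differ in some pair of complete executions $\rho$ and $\rho'$; I must then show $x\in\hat{T}_\ell$. Recall that $\hat{T}_\ell$ is obtained from $T_\ell'$ by removing $x$ only when the safety verifier proves valid the \sassert\ statement $x=\hat{x}$ introduced by $\Guard'_\ell(x)$ in $\hat{\Prog}$. So it suffices to show that if the values of $x$ differ in $\rho,\rho'$, then (i) $x\in T_\ell'$ already (so it was not removed by the earlier analyses — this is exactly Lemma~\ref{lem:preciseTA}, since differing values at $\ell$ with equal public inputs forces $x\notin T_\ell'$ to fail), and (ii) the assertion $x=\hat{x}$ at $\ell$ is \emph{not} valid in $\hat{\Prog}$, hence $x$ is not removed when forming $\hat{T}_\ell$.

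First I would establish the key invariant tying executions of $\hat{\Prog}$ to pairs of executions of $\Prog$: for any initial state $s_0$ of $\Prog$ and any extension $\hat{s}_0$ that additionally fixes the shadow input variables, there is a (lockstep) execution of $\hat{\Prog}$ whose original-variable projection follows the $\Prog$-execution from $s_0$ and whose shadow-variable values, at every program point, equal the corresponding values in the $\Prog$-execution from the state $s_0''$ obtained by reading the shadow inputs — \emph{provided} the two $\Prog$-executions take the same control flow. This is proved by induction on the length of the execution, with a case analysis on the statement forms following Fig.~\ref{fig:product}: for assignments $x:=e$, $x:=y[z]$, $x[y]:=z$, either the taint-guarded branch copies the shadow computation faithfully (so $\Xi$ simulates $e$ over shadow variables), or the untainted branch sets $\hat{x}:=x$, which is correct precisely because, by Lemma~\ref{lem:lightweightTA}/\ref{lem:preciseTA} applied inductively, the untainted variable already has equal values in the two $\Prog$-executions; for \sif\ and \swhile\ the control flow is shared by construction; for procedure calls the same reasoning is threaded through $\Pi$ on procedure bodies. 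Here one also checks that the loop-invariant annotations {\tt INV'} and the removal of invalid ones do not affect semantics — invariants only prune, never add, executions.

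Next, I instantiate this with $\hat{s}_0$ formed from $c_0$ (original inputs) and $c_0'$ (shadow inputs): since $c_0\simeq_{\Var^{in}_l}c_0'$, the public inputs agree, so the shadow public inputs are correctly initialized as required, and the two underlying $\Prog$-executions are exactly $\rho$ and $\rho'$. Because $\rho,\rho'$ are \emph{complete} and the values of $x$ at $\ell$ differ between them, the corresponding lockstep execution of $\hat{\Prog}$ reaches the point just before $\Guard'_\ell(x)$ with $x\neq\hat{x}$; together with $x\in T_\ell'$ (from step (i)) this means $\Guard'_\ell(x)$ emits the genuine assertion $\sassert\ x=\hat{x}$, which fails on this execution. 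Hence a sound safety verifier cannot prove it valid, so $x$ is not removed from $T_\ell'$ when constructing $\hat{T}_\ell$, i.e. $x\in\hat{T}_\ell$, contradiction.

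The main obstacle I anticipate is making the control-flow-matching hypothesis in the simulation invariant precise and then discharging it: the lockstep construction only yields a valid $\hat{\Prog}$-execution when $\rho$ and $\rho'$ execute the same sequence of statements, yet a priori they need not. The resolution is that whenever the control flow first diverges — at a branch or loop whose condition variable has differing values — that condition variable is itself a potential side-channel source, and the simulation up to that point shows the corresponding $\Guard'$ assertion fails there, so that side-channel source is tainted; but for the specific $(\ell,x)$ under consideration, either $\ell$ is reached before any divergence (and the argument above applies verbatim), or the divergence already witnesses that the relevant guard on the path to $\ell$ is violated, again preventing removal of the offending variable. Care is needed to phrase the induction so it covers this "prefix up to first divergence" case uniformly; modulo that bookkeeping, the remaining cases are routine and parallel the sketch of Lemma~\ref{lem:preciseTA}.
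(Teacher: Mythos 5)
Your plan is correct and follows essentially the same route as the paper, whose proof is a one-line appeal to ``the correctness of the product construction~\cite{ABBDE16} and Lemmas~\ref{lem:lightweightTA}--\ref{lem:preciseTA}'': your lockstep simulation invariant is exactly the product-construction correctness being cited, and your use of the earlier lemmas to justify the untainted shortcuts ($\hat{x}:=x$) matches the paper's reliance on them. Your explicit treatment of the control-flow-divergence case (first divergence is itself a guarded side-channel source) is a detail the paper leaves implicit, but it does not change the approach.
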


\begin{proof}[Proof sketch]
The correctness of the lemma  follows directly from the correctness of
the product construction~\cite{ABBDE16} and Lemmas~\ref{lem:lightweightTA}---\ref{lem:preciseTA}.
\end{proof}

By Lemmas~\ref{lem:lightweightTA}---\ref{lem:product}, we obtain   

\begin{theorem}\label{thm:maintheorem}
Given a safe program $\Prog$,     
if $\hat{T}_\ell=\emptyset$ for any potential side-channel source $(\ell,x)$, $\Prog$ is constant-time secure.
\end{theorem}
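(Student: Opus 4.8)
\textbf{Proof proposal for Theorem~\ref{thm:maintheorem}.}
The plan is to reduce constant-time security directly to the union of the soundness guarantees established in Lemmas~\ref{lem:lightweightTA}--\ref{lem:product}. First I would recall the definition of constant-time security: given a safe program $\Prog$, fix an arbitrary pair of complete executions $\rho=c_0c_1\cdots c_n$ and $\rho'=c_0'c_1'\cdots c_{n'}'$ with $c_0\simeq_{\Var^{in}_l} c_0'$, and the goal is to show $\obs(\rho)=\obs(\rho')$. The key structural fact, already noted after Definition~\ref{def:leakagemodel}, is that two executions agreeing on all observations must follow the same control flow, and conversely any discrepancy in the observation sequences must manifest at the \emph{first} index $i$ where $\obs(c_i)\neq\obs(c_i')$; at that index the two executions have so far shared the same sequence of statements, so $c_i$ and $c_i'$ execute the same labelled statement $\ell:p$, and $p$ must be of one of the forms in Definition~\ref{def:leakagemodel}(1)--(4) with operand/condition $x$. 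Hence $(\ell,x)$ is a potential side-channel source, and $s_i(x)\neq s_i'(x)$ where $c_i=\langle s_i,p\rangle$ and $c_i'=\langle s_i',p\rangle$.

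Next I would invoke the hypothesis: by assumption $\hat{T}_\ell=\emptyset$ for every potential side-channel source $(\ell,x)$, so in particular $x\notin\hat{T}_\ell$. Lemma~\ref{lem:product} then tells us that, for any pair of initial configurations that are $\Var^{in}_l$-equivalent, the values of $x$ at label $\ell$ coincide in any pair of complete executions --- contradicting $s_i(x)\neq s_i'(x)$. Therefore no such first discrepancy index can exist, which gives $\obs(c_j)=\obs(c_j')$ for all $j$ up to $\min(n,n')$. A small additional argument handles the lengths: since the observation prefixes agree, the two executions share the same control flow, and because both are \emph{complete} (ending in $\langle\cdot,\sskip\rangle$) and the program is deterministic given the shared control-flow decisions, they must have the same length $n=n'$; hence $\obs(\rho)=\obs(\rho')$ in full, establishing constant-time security.

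The one point requiring care --- and the main obstacle --- is the ``first discrepancy'' step: I must be sure that the sequences of executed statements in $\rho$ and $\rho'$ genuinely coincide up to index $i$, so that $c_i$ and $c_i'$ really are poised to execute the \emph{same} labelled statement. This follows by induction on $i$: if $\obs(c_j)=\obs(c_j')$ for all $j<i$ then every branch condition (rules {\sc If}$_t$/{\sc If}$_f$) and loop condition (rules {\sc While}$_t$/{\sc While}$_f$) evaluated so far agreed, and likewise the inter-procedural {\sc Call-Ret} transitions are driven by the same call statements; since the semantics of Figure~\ref{fig:semantics} is otherwise deterministic, the continuation statements $p$ in $c_i$ and $c_i'$ are identical. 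Note this argument never uses the values of secret inputs or of intermediate tainted variables beyond what Lemma~\ref{lem:product} supplies; the whole force of the theorem is packaged into that lemma, and the present proof is essentially the observation that ``all potential side-channel sources are benign'' is exactly the contrapositive of ``some observation differs.'' Finally, I would remark that safety of $\Prog$ is needed only to make the notion of constant-time security well-defined (executions do not get stuck in $\bot$), and is already a standing hypothesis of the lemmas.
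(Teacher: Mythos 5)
Your proposal is correct and follows essentially the same route as the paper, which derives Theorem~\ref{thm:maintheorem} directly from Lemmas~\ref{lem:lightweightTA}--\ref{lem:product}; you simply make explicit the first-discrepancy/lockstep argument that the paper leaves implicit. The added detail (induction showing the two executions are poised at the same labelled statement at the first observation mismatch, then contradiction via Lemma~\ref{lem:product}) is exactly the intended justification.
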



\section{Implementation and Evaluation}\label{sec:expe}
We implement our approach in a fully automated prototype tool, named \tool. \tool leverages the static value-flow analysis framework SVF~\cite{sui2016svf} for pre-analysis (i.e., computing call-graph, interprocedural-control-flow graph, points-to information and definition-use chains) and  Phasar~\cite{schubert2019phasar} for building our lightweight taint analysis. In particular, \tool leverages points-to information from SVF to cope with dynamic memory accesses and the recursive implementation of the IFDS framework in Phasar is rewritten as a worklist-based loop implementation for efficiency consideration. \tool utilizes the SMACK toolchain~\cite{rakamaric2014smack} to translate LLVM IR with taint information into Boogie IR~\cite{BarnettCDJL05} and Bam-Bam-Boogieman~\cite{Bam-bam-boogieman} to construct product programs on Boogie IR. The Boogie verifier is used as the underlying safety verification engine. 
Dynamic memory accesses in product programs are handled by SMACK and Boogie.

\begin{table}[t]
\setlength{\tabcolsep}{2pt}
  \centering 
  \caption{Statistics of 87 Benchmarks, where {\bf \#Loc} and 
  {\bf \#Src} show the numbers of lines and potential side-channel sources of the program in Boogie IR, respectively; $\dagger$ indicates that the benchmark is claimed to be constant-time by developers.}
  \vspace{-2mm}
  \scalebox{0.72}{  
      \begin{tabular}{|l|l|r|r| c |l|l|r|r|}
\cline{1-4}\cline{6-9}    {\bf Name}  & {\bf Lib/Algorithm/Function}   & {\bf \#Loc} &  {\bf \#Src} & $\quad$  &  {\bf Name} &  {\bf Lib/Algorithm/Function} &  {\bf \#Loc} &  {\bf \#Src} \\
\cline{1-4}\cline{6-9}    P1\_1 & OpenSSL\_tls1\_cbc\_remove\_padding & 1042  & 71    &   &  {P6\_1} &  {BearSSL\_AES\_small\_decrypt} &  {1395} &  {310} \\
    P1\_2 & OpenSSL\_ssl3\_cbc\_digest\_record & 13089 & 198   &   &  {P6\_2} &  {BearSSL\_AES\_small\_encrypt} &  {945} &  {54} \\
    P1\_3 & OpenSSL\_ssl3\_cbc\_remove\_padding & 351   & 14    &   &  {P6\_3$^\dagger$} &  {BearSSL\_ChaCha20\_ct\_run} &  {2655} &  {316} \\
    P1\_4 & OpenSSL\_ssl3\_cbc\_copy\_mac & 452   & 15    &   &  {P6\_4$^\dagger$} &  {BearSSL\_GHASH\_ctmul32\_br\_ghash\_ctmul} &  {2588} &  {202} \\
\cline{1-4}    p2\_1$^\dagger$ & MAC-then-Encode-then-CBC-Encrypt  & 22036 & 396   &   &  {P6\_5$^\dagger$} &  {BearSSL\_RSA\_i15\_decrypt} &  {5928} &  {229} \\
\cline{1-4}    P3\_1$^\dagger$ & Hacl\_HMAC\_compute\_sha2\_256  & 27404 & 1673  &   &  {P6\_6$^\dagger$} &  {BearSSL\_EC\_p256\_m15\_api\_mul} &  {19535} &  {1895} \\
    P3\_2$^\dagger$ & Hacl\_HMAC\_compute\_blake2b\_32  & 68502 & 5289  &   &  {P6\_7$^\dagger$} &  {BearSSL\_EC\_p256\_m31\_api\_mul} &  {9560} &  {796} \\
    P3\_3$^\dagger$ & Hacl\_HMAC\_legacy\_compute\_sha1  & 2149  & 111   &   &  {P6\_8$^\dagger$} &  {BearSSL\_EC\_p256\_m64\_api\_mul} &  {6450} &  {380} \\
    P3\_4$^\dagger$ & Hacl\_HMAC\_compute\_blake2s\_32  & 56465 & 4309  &   &  {P6\_9$^\dagger$} &  {BearSSL\_GHASH\_ctmul64} &  {1060} &  {10} \\
    P3\_5$^\dagger$ & Hacl\_HMAC\_compute\_sha2\_384  & 34620 & 2183  &   &  {P6\_10$^\dagger$} &  {BearSSL\_AES\_ct\_bitslice\_encrypt} &  {1345} &  {60} \\
    P3\_6$^\dagger$ & Hacl\_HMAC\_compute\_sha2\_512  & 34565 & 2177  &   &  {P6\_11$^\dagger$} &  {BearSSL\_AES\_ct\_bitslice\_decrypt} &  {1788} &  {72} \\
    P3\_7$^\dagger$ & Hacl\_Chacha20\_chacha20\_decrypt  & 3303  & 180   &   &  {P6\_12$^\dagger$} &  {BearSSL\_AES\_ct\_key\_sched} &  {2161} &  {100} \\
    P3\_8$^\dagger$ & Hacl\_Chacha20\_chacha20\_encrypt  & 3285  & 180   &   &  {P6\_13} &  {BearSSL\_AES\_big\_cbc\_key\_schedule} &  {836} &  {290} \\
    P3\_9$^\dagger$ & Hacl\_Curve25519\_64\_ecdh  & 1813  & 70    &   &  {P6\_14} &  {BearSSL\_AES\_big\_cbc\_decrypt} &  {1813} &  {565} \\
    P3\_10$^\dagger$ & Hacl\_Curve25519\_64\_scalarmult  & 1619  & 67    &   &  {P6\_15} &  {BearSSL\_AES\_big\_cbc\_encrypt} &  {1806} &  {565} \\
    P3\_11$^\dagger$ & Hacl\_Curve25519\_64\_secret\_to\_public  & 1701  & 102   &   &  {P6\_16$^\dagger$} &  {BearSSL\_RSA\_i15\_pkcs1\_sign} &  {6464} &  {264} \\
    P3\_12$^\dagger$ & Hacl\_Poly1305\_32\_poly1305\_mac  & 2240  & 121   &   &  {P6\_17} &  {BearSSL\_DES\_table\_cbc\_decrypt} &  {1514} &  {555} \\
    P3\_13$^\dagger$ & Hacl\_Poly1305\_128\_poly1305\_mac  & 72    & 0     &   &  {P6\_18} &  {BearSSL\_DES\_table\_cbc\_encrypt} &  {1501} &  {555} \\
    P3\_14$^\dagger$ & Hacl\_Poly1305\_256\_poly1305\_mac  & 72    & 0     &   &  {P6\_19$^\dagger$} &  {BearSSL\_RSA\_i31\_pkcs1\_sign} &  {5423} &  {235} \\
    P3\_15$^\dagger$ & Hacl\_Curve25519\_51\_ecdh  & 21158 & 2313  &   &  {P6\_20} &  {BearSSL\_Poly1305\_i15\_ChaCha20\_run} &  {3409} &  {347} \\
    P3\_16$^\dagger$ & Hacl\_Curve25519\_51\_scalarmult  & 20964 & 2310  &   &  {P6\_21$^\dagger$} &  {BearSSL\_Poly1305\_ctmul32\_ChaCha20\_run} &  {4802} &  {425} \\
    P3\_17$^\dagger$ & Hacl\_Curve25519\_51\_secret\_to\_public  & 21046 & 2345  &   &  {P6\_22$^\dagger$} &  {BearSSL\_EC\_p256\_m62\_api\_mul} &  {6617} &  {472} \\
\cline{1-4}    P4\_1 & Tongsuo\_curve448\_ossl\_x448 & 6674  & 446   &   &  {P6\_23$^\dagger$} &  {BearSSL\_GHASH\_ctmul\_br\_ghash\_ctmul} &  {1947} &  {125} \\
    P4\_2 & Tongsuo\_curve448\_derive\_pub\_key & 8994  & 575   &   &  {P6\_24$^\dagger$} &  {BearSSL\_AES\_ct64\_bitslice\_encrypt} &  {1348} &  {60} \\
    P4\_3 & Tongsuo\_AES\_decrypt & 3274  & 1382  &   &  {P6\_25$^\dagger$} &  {BearSSL\_AES\_ct64\_bitslice\_decrypt} &  {1791} &  {72} \\
    P4\_4 & Tongsuo\_AES\_encrypt & 2982  & 1126  &   &  {P6\_26$^\dagger$} &  {BearSSL\_AES\_ct64\_key\_sched} &  {2069} &  {119} \\
    P4\_5 & Tongsuo\_constant\_time\_lookup & 339   & 5     &   &  {P6\_27$^\dagger$} &  {BearSSL\_RSA\_i32\_decrypt} &  {4170} &  {179} \\
    P4\_6 & Tongsuo\_curve25519\_derive\_pub\_key & 14744 & 7840  &   &  {P6\_28$^\dagger$} &  {BearSSL\_Poly1305\_ctmul\_ChaCha20\_run} &  {4318} &  {363} \\
    P4\_7 & Tongsuo\_curve25519\_ossl\_x25519 & 3561  & 163   &   &  {P6\_29$^\dagger$} &  {BearSSL\_Poly1305\_ctmulq\_ChaCha20\_run} &  {6761} &  {415} \\
\cline{1-4}    P5\_1 & Mbed TLS\_rsa\_decrypt & 13797 & 788   &   &  {P6\_30$^\dagger$} &  {BearSSL\_DES\_ct\_cbc\_decrypt} &  {1670} &  {49} \\
    P5\_2$^\dagger$ & Mbed TLS\_mpi\_lt\_mpi\_ct  & 411   & 26    &   &  {P6\_31$^\dagger$} &  {BearSSL\_DES\_ct\_cbc\_encrypt} &  {1658} &  {49} \\
\cline{6-9}    P5\_3$^\dagger$ & Mbed TLS\_ct\_rsaes\_pkcs1\_v15\_unpadding  & 629   & 19    &   &  {P8\_1$^\dagger$} &  {FourQlib\_ECC\_double\_eccdouble} &  {28} &  {0} \\
    P5\_4$^\dagger$ & Mbed TLS\_mpi\_safe\_cond\_assign  & 729   & 43    &   &  {P8\_2$^\dagger$} &  {FourQlib\_ECC\_madd\_eccmadd} &  {1632} &  {58} \\
    P5\_5$^\dagger$ & Mbed TLS\_mpi\_core\_lt\_ct  & 214   & 5     &   &  {P8\_3$^\dagger$} &  {FourQlib\_ECC\_norm\_eccnorm} &  {2390} &  {79} \\
\cline{6-9}    P5\_6$^\dagger$ & Mbed TLS\_mpi\_safe\_cond\_swap  & 765   & 46    &   &  {P9\_1} &  {libsodium\_core\_salsa208} &  {1126} &  {9} \\
    P5\_7$^\dagger$ & Mbed TLS\_ct\_memcmp  & 145   & 7     &   &  {P9\_2} &  {libsodium\_aead\_chacha20poly1305\_decrypt} &  {9311} &  {397} \\
    P5\_8$^\dagger$ & Mbed TLS\_ct\_mpi\_uint\_cond\_assign  & 140   & 4     &   &  {P9\_3} &  {libsodium\_core\_salsa20} &  {1094} &  {9} \\
    P5\_9$^\dagger$ & Mbed TLS\_ct\_memcpy\_offset  & 290   & 5     &   &  {P9\_4} &  {libsodium\_stream\_chacha20} &  {6620} &  {259} \\
    P5\_10$^\dagger$ & Mbed TLS\_ct\_base64\_dec\_value  & 309   & 0     &   &  {P9\_5} &  {libsodium\_onetimeauth\_poly1305\_block} &  {778} &  {25} \\
    P5\_11 & Mbed TLS\_DES\_crypt\_cbc & 1991  & 546   &   &  {P9\_6} &  {libsodium\_hash\_sha512\_Transformer} &  {29713} &  {2508} \\
\cline{1-4}    P7\_1$^\dagger$ & libfixedtimefixedpoint\_fix\_pow\_fix\_pow  & 24538 & 80    &   &  {P9\_7} &  {libsodium\_hash\_sha256\_Transformer} &  {23806} &  {2008} \\
\cline{6-9}    P7\_2$^\dagger$ & libfixedtimefixedpoint\_fix\_cmp\_fix\_cmp  & 591   & 0     &   &  {P10\_1} &  {curve25519-donna portable implementation} &  {11737} &  {722} \\
    P7\_3$^\dagger$ & libfixedtimefixedpoint\_fix\_ln\_fix\_ln  & 15200 & 0     &   &  {P10\_2$^\dagger$} &  {curve25519-donna c64 implementation} &  {23936} &  {1599} \\
\cline{6-9}    P7\_4$^\dagger$  & libfixedtimefixedpoint\_fix\_eq\_fix\_eq  & 153   & 0     &  \multicolumn{5}{c}{}  \\
\cline{1-4}    \end{tabular}} %
  \label{tab:statistics}\vspace{-2mm} %
\end{table}%

\smallskip
\noindent
{\bf Research questions}. We investigate the following research questions: 
\begin{enumerate}[label=\textbf{RQ\arabic*.},itemindent=*,leftmargin=*,itemsep=0pt]
\item  How effective and efficient is \tool in proving 
constant-time security?
\item  How efficient is \tool for finding side-channel sources?
\item  What are the respective contributions of the three main steps in \tool?
\end{enumerate}
 
\smallskip
\noindent
{\bf Benchmark}. We collect 87 real-world examples which are implementations from widely used modern cryptographic and SSL/TLS libraries (e.g.,
Tongsuo~\cite{tongsuo23}, BearSSL~\cite{BearSSL23}, Mbed TLS~\cite{MbedTLS23},  HACL*~\cite{hacl*23}, FourQlib~\cite{FourQlib}, libsodium~\cite{libsod23} and OpenSSL~\cite{openSSL23}) and 
(constant-time) implementations from fixed-point arithmetic library (e.g., libfixedtimefixedpoint~\cite{7163051}, curve25519-donna~\cite{donna23}, and MEE-CBC implementations~\cite{almeida2016verifiable}). These benchmarks include cryptographic utilities, arithmetic operations, public and private key cryptography, and algorithms for encryption, decryption, message authentication code (MAC), and digital signature. 
Among 87 examples, 58 are explicitly claimed to be constant-time by developers. 

The statistics of the benchmarks are shown in Table~\ref{tab:statistics}, where benchmarks with constant-time claims are marked by$^\dagger$, \#Loc shows the number of lines of the analyzed Boogie IR code (similar to LLVM bitcode~\cite{ABBDE16}), and \#Src shows the number of all potential side-channel sources. 
\#Loc ranges from 28 to 68,502 (667,518 in total)
and \#Src ranges from 0 to 7,840 (55,060 in total).
Interestingly, we found that 7 (out of 87) benchmarks 
have no potential  side-channel source after being translated
into Boogie IR, indicating that they avoid the use of branching and load/store-related statements.  
We include them because they can be used
to validate the tool implementation and measure
the verification efficiency, and also because some of them were
verified in {\sf ct-verif}~\cite{ABBDE16}.

The experiments were conducted on a machine with Intel Xeon Gold 6342 2.80GHz CPU, 1T RAM, and Ubuntu 20.04.1. 
All the benchmarks were compiled with {\tt clang-12 -c -emit-llvm -O0 -g -Xclang -disable-O0-optnone} and  {\tt opt -mem2reg},
the same as {\sf ct-verif}~\cite{ABBDE16}. In particular,
{\tt O0} disables optimizations to avoid compiler-introduced leakage, {\tt disable-O0-optnone} disables the `optnone’ pass which affects `-mem2reg’ and {\tt mem2reg} is used to reduce redundant load/store instructions. 

\begin{table}[t]
\setlength{\tabcolsep}{2.5pt}
  \centering
  \caption{Results of verifying constant-time implementations, where TO denotes time out (1 hour).}
  \vspace{-2mm}
\scalebox{0.72}{  
    \begin{tabular}{|l|r|r|r|l|r|r|r|l|r|r|r|l|r|r|}
\cline{1-3}\cline{5-7}\cline{9-11}\cline{13-15}    {\bf Name}  &  {\tool} &  {\sf ct-verif} &   $\quad$    & {\bf Name}  &  {\tool} &  {\sf ct-verif} &    $\quad$     & {\bf Name}  &  {\tool} &  {\sf ct-verif} &    $\quad$     & {\bf Name} &  {\tool} &  {\sf ct-verif} \\
\cline{1-3}\cline{5-7}\cline{9-11}\cline{13-15}    P1\_1 & 7.55  & 7.53  &       & P3\_14 $^\dagger$ & 0.04  & 6.93  &       & P6\_3 $^\dagger$ & 0.10  & 9.25  &       & P6\_30 $^\dagger$ & 0.07  & 8.26  \\
    P1\_2 & 1.12  & 26.74  &       & P3\_15 $^\dagger$ & 0.70  & {\cellcolor{red!20}TO} &       & P6\_4 $^\dagger$ & 0.20  & 9.04  &       & P6\_31 $^\dagger$ & 0.07  & 8.51  \\
    P1\_3 & 0.04  & 7.25  &       & P3\_16 $^\dagger$ & 0.67  & {\cellcolor{red!20}TO} &       & P6\_6 $^\dagger$ & 23.24  & {\cellcolor{red!20}TO} &       & P7\_1 $^\dagger$ & 0.64  & \cellcolor{blue!20}TO  \\
    P1\_4 & 0.05  & 7.89  &       & P3\_17 $^\dagger$ & 0.63  & {\cellcolor{red!20}TO} &       & P6\_7 $^\dagger$ & 0.45  & {\cellcolor{red!20}TO} &       & P7\_2 $^\dagger$ & 0.51  & 7.42  \\
    p2\_1 $^\dagger$ & 1.44  & 134.72  &       & P4\_1 & 0.53  & {\cellcolor{red!20}TO} &       & P6\_8 $^\dagger$ & 0.21  & 794.79  &       & P7\_3 $^\dagger$ & 0.59  & 21.36  \\
    P3\_1 $^\dagger$ & 108.80  & {\cellcolor{red!20}TO} &       & P4\_2 & 0.57  & {\cellcolor{red!20}TO} &       & P6\_9 $^\dagger$ & 0.05  & 7.65  &       & P7\_4 $^\dagger$ & 0.52  & 7.37  \\
    P3\_2 $^\dagger$ & 177.48  & {\cellcolor{red!20}TO} &       & P4\_5 & 0.04  & 7.42  &       & P6\_10 $^\dagger$ & 0.07  & 7.79  &       & P8\_1 $^\dagger$ & 0.20  & 6.87  \\
    P3\_3 $^\dagger$ & 104.40  & {\cellcolor{red!20}TO} &       & P4\_6 & 0.35  & 177.25  &       & P6\_11 $^\dagger$ & 0.08  & 8.22  &       & P8\_2 $^\dagger$ & 0.40  & 21.69  \\
    P3\_4 $^\dagger$ & 169.71  & {\cellcolor{red!20}TO} &       & P4\_7 & 0.34  & 14.16  &       & P6\_12 $^\dagger$ & 0.07  & 8.66  &       & P8\_3 $^\dagger$ & 0.39  & 37.73  \\
    P3\_5 $^\dagger$ & 109.19  & {\cellcolor{red!20}TO} &       & P5\_2 $^\dagger$ & 0.08  & 7.27  &       & P6\_20 & 0.11  & 12.38  &       & P9\_1 & 0.14  & 7.86  \\
    P3\_6 $^\dagger$ & 122.85  & {\cellcolor{red!20}TO} &       & P5\_3 $^\dagger$ & 0.08  & 7.36  &       & P6\_21 $^\dagger$ & 0.14  & 14.49  &       & P9\_2 & 25.69  & 23.02  \\
    P3\_7 $^\dagger$ & 0.12  & 30.51  &       & P5\_4 $^\dagger$ & 1.78  & 7.49  &       & P6\_22 $^\dagger$ & 0.24  & 977.27  &       & P9\_3 & 0.13  & 7.97  \\
    P3\_8 $^\dagger$ & 0.13  & 30.65  &       & P5\_5 $^\dagger$ & 0.07  & 7.05  &       & P6\_23 $^\dagger$ & 0.11  & 8.40  &       & P9\_4 & 16.05  & \cellcolor{blue!20} 11.91   \\
    P3\_9 $^\dagger$ & 0.10  & 8.94  &       & P5\_6 $^\dagger$ & 1.86  & 7.57  &       & P6\_24 $^\dagger$ & 0.06  & 7.93  &       & P9\_5 & 0.06  & 7.25  \\
    P3\_10 $^\dagger$ & 0.09  & 8.91  &       & P5\_7 $^\dagger$ & 0.08  & 6.91  &       & P6\_25 $^\dagger$ & 0.08  & 8.41  &       & P9\_6 & 4.25  & 44.12  \\
    P3\_11 $^\dagger$ & 0.09  & 9.26  &       & P5\_8 $^\dagger$ & 0.08  & 6.46  &       & P6\_26 $^\dagger$ & 0.06  & 8.94  &       & P9\_7 & 3.34  & 32.98  \\
    P3\_12 $^\dagger$ & 0.09  & 9.28  &       & P5\_9 $^\dagger$ & 0.07  & 6.63  &       & P6\_28 $^\dagger$ & 0.12  & 12.23  &       & P10\_1 & 0.55  & {\cellcolor{red!20}TO} \\
    P3\_13 $^\dagger$ & 0.04  & 7.05  &       & P5\_10 $^\dagger$ & 0.08  & 7.00  &       & P6\_29 $^\dagger$ & 0.17  & 13.86  &       & P10\_2 $^\dagger$ & 10.89  & 33.18  \\
\cline{1-3}\cline{5-7}\cline{9-11}\cline{13-15}    \end{tabular}}%
  \label{tab:2}\vspace{-2mm}%
\end{table}%
\subsection{Verifying Constant-time Implementations (RQ1, RQ3)} \label{sect:rq1}
To answer RQ1, we verify all 87 programs. \tool returns "\emph{proved}" for 72 benchmarks. (One will see later that the other 15 programs are indeed not constant-time and 
the constant-time claim made by the developer may be incorrect.) We mainly compare \tool with {\sf ct-verif} which is the only available tool at the LLVM IR level. 

The results are reported in Table~\ref{tab:2}, where 
verification time is given in seconds and TO denotes time out (1 hour).
{\sf ct-verif} fails to prove 
16 constant-time programs: it runs out of time on 14 programs without outputting any potential side-channel sources (highlighted in \textcolor{red!60}{red} color) and outputs inconclusive verification results on 2 programs within 1 hour (highlighted in \textcolor{blue!60}{blue} color). The latter is because of the loop in P7\_1 (cf. Example 1) and the type-casting of pointers in P9\_4. This indicates that  
the theoretical completeness of self-composition based approaches 
may be compromised by the limitation of safety verification. 
On the other 56 constant-time programs that can be proved
by {\sf ct-verif} and \tool, \tool is about 50 times faster.

To partially answer RQ3, we inspect at which step a program can be proved
in the verification process of \tool. We found that 
all these programs (except for P1\_1 and P9\_4) can be proved 
using the lightweight taint analysis solely. 
The lightweight taint analysis cannot determine 4 (resp. 2) potential side-channel sources of P1\_1 (resp. P9\_4) due to an infeasible branch condition
(resp. index-insensitive), which were resolved by the precise taint analysis. While the individual cost of the two taint analyses is lower than {\sf ct-verif}, the accumulated cost is slightly 
higher. 

In summary, \tool can return conclusive results for these benchmarks. Most constant-time implementations can be proved by the lightweight taint analysis solely, whereas
the remaining ones can be proved by the precise taint analysis.
\tool significantly outperforms the state-of-the-art 
tool {\sf ct-verif} for real-world constant-time implementations.


\subsection{Verifying Non-constant-time Implementations (RQ2, RQ3)}
To answer RQ2, we check the remaining 15 cases which are deemed to be vulnerable. The results are shown in Table~\ref{tab:3}, where \toollight refers to \tool \emph{without} the precise taint analysis (i.e., the 2nd main step), 
a:b in Time (s) respectively denote the execution time after the 2nd and 3rd main step (note: the 1st step is very efficient 
whose time is negligible), TO denotes time out (6 hours), x:y:z (resp. x:z) in \#Src respectively denote the numbers of side-channel sources reported after the 1st, 2nd and 3rd (resp. 1st and 3rd) main step, and $^{\star}$ indicates that some invalid loop invariants are added by {\sf ct-verif} and thus may miss side-channel sources.
We have manually checked all these final potential side-channel sources 
with their corresponding execution traces produced by \tool,
and found that all of them are genuine timing side-channel vulnerabilities (i.e., not false positives). In particular, P6\_5, P6\_16, P6\_19 and P6\_27 (from BearSSL) were claimed to be of constant-time.\footnote{
We have reported all the potential vulnerabilities in Table~\ref{tab:3} to the respective developer(s), 
and received confirmations for Mbed TLS and BearSSL.}

\begin{table}[t]
  \centering\setlength{\tabcolsep}{2pt}
  \caption{Results of verifying non-constant-time implementations, where \toollight  refers to \tool \emph{without} the precise taint analysis (i.e., the 2nd main step), 
  a:b in Time (s) respectively denote the execution time 
  after the 2nd and 3rd main step (note: the 1st step is very efficient 
  whose time is negligible), TO denotes time out (6 hours), x:y:z (resp. x:z) in \#Src respectively denote the numbers of side-channel sources reported after the 1st, 2nd and 3rd (resp. 1st and 3rd) main step, and $^{\star}$ indicates that
  some invalid loop invariants are added by {\sf ct-verif} and thus
  may miss side-channel sources.}
   \vspace{-2mm}
  \scalebox{0.72}{
   \begin{tabular}{|l|r|r|r|r|r|r| c  |l|r|r|r|r|r|r|}
\cline{1-7}\cline{9-15}    
\multirow{2}{*}{\bf Name} & \multicolumn{2}{c|}{\tool} & \multicolumn{2}{c|}{\sf ct-verif} & \multicolumn{2}{c|}{\toollight} & \multicolumn{1}{c|}{$\quad$}
& \multirow{2}{*}{\bf Name} & \multicolumn{2}{c|}{\tool} & \multicolumn{2}{c|}{\sf ct-verif} & \multicolumn{2}{c|}{\toollight} \\ \cline{2-7}\cline{10-15}        
& Time (s) & \#Src & Time (s) & \#Src & Time (s) & \#Src &   &   &  {Time (s)} &  {\#Src} &  {Time (s)} &  {\#Src} &  {Time (s)} &  {\#Src} \\\cline{1-7}\cline{9-15}  

P4\_3 & 24.05:50.84  & 49:48:48 & 17.69  & 32$^{\star}$    & 24.77  & 49:48 &   & \multicolumn{1}{l|} {P6\_14} & {13.06:26.57}  & {32:32:32} & {10.58}  & {0$^{\star}$}    & {12.64}  & {32:32} \\
    P4\_4 & 21.03:44.76  & 49:48:48 & 16.73  & 32$^{\star}$    & 21.96  & 49:48 &   & \multicolumn{1}{l|}  {P6\_15} & {13.06:26.35}  & {32:32:32} & {10.37}  & {0$^{\star}$}    & {12.42}  & {32:32}  \\
    P5\_1 & \cellcolor{blue!20}TO:TO   & 48:--:-- & \cellcolor{red!20}TO    & -- & \cellcolor{blue!20}TO    & 48:-- &   & \multicolumn{1}{l|}  {P6\_16$^\dagger$} &  {\cellcolor{blue!20}2561.81:TO} &  {107:71:53} &  {\cellcolor{blue!20}TO} &  {21$^{\star}$} &  {\cellcolor{blue!20}TO} &  {107:55}\\
    P5\_11 & 14.10:30.68  & 26:16:16 & 11.81  & 16    & 15.20  & 26:16 &   & \multicolumn{1}{l|} {P6\_17} &  {9.75:20.31} &  {11:8:8} &  {9.29} &  {8} &  {9.48} &  {11:8} \\
    P6\_1 & 9.36:19.27  & 12:1:1 & 8.46  & 1     & 9.13  & 12:1  &   & \multicolumn{1}{l|} {P6\_18} &  {9.77:20.12} &  {13:8:8} &  {9.26} &  {8} &  {9.63} &  {13:8} \\
    P6\_2 & 8.48:17.24  & 12:1:1 & 7.45  & 1     & 8.46  & 12:1  &   & \multicolumn{1}{l|} {P6\_19$^\dagger$} &  {\cellcolor{blue!20}3515.77:TO} &  {142:70:39} &  {\cellcolor{blue!20}TO} &  {21$^{\star}$} &  {\cellcolor{blue!20}TO} &  {142:46} \\
    P6\_5$^\dagger$ & 983.86:{\cellcolor{blue!20}TO}  & 107:71:50 & \cellcolor{blue!20}TO    & 21$^{\star}$    & \cellcolor{blue!20}TO    & 107:49 &   & \multicolumn{1}{l|} {P6\_27$^\dagger$} &  {131.34:503.23} &  {177:45:45} &  {450.57} &  {22$^{\star}$} &  {580.50} &  {177:45} \\
\cline{9-15}  P6\_13 & 8.85:17.87  & 6:4:4 & 8.45  & 4     & 8.51  & 6:4    &    \multicolumn{6}{c}{}    \\
\cline{1-7}   \end{tabular}}%
  \label{tab:3}\vspace{-3mm}%
\end{table}%

Both \tool and {\sf ct-verif} ran out of time on 4 programs (P5\_1, P6\_5, P6\_16 and P6\_19). 
We note that on P5\_1 our lightweight taint analysis is still able to find 48 potential side-channel sources based on which we manually confirm that P5\_1 is not constant-time.
\tool finally found more side-channel sources than {\sf ct-verif} on 
9 (out of 15) programs (marked by $^\star$ in Table~\ref{tab:3}), because {\sf ct-verif} misses side-channel sources when some additional loop invariants cannot be proved, while \tool still works after automatically removing such loop invariants.
On the other programs, \tool and {\sf ct-verif} report the same side-channel sources. 
In terms of efficiency, \tool is slightly slower than
{\sf ct-verif}. It is not surprising as \tool involves
three main steps while the last two main steps have to perform safety checking. We remark that
{\sf ct-verif} may miss potential
side-channel sources since it skips part of the program (e.g., the loop body) when additional loop invariants cannot be proved, which 
explains its reduced execution time.

To partially answer RQ3, we analyze the respective number of (potential) side-channel sources reported by the three main steps of \tool and two main steps of \toollight (in the form of x:y:z and x:z in Table~\ref{tab:3}).
We find that the lightweight taint analysis can determine a large number of 
potential side-channel sources, the precise taint analysis can resolve
the remaining few unsolved ones (i.e., P4\_3, P4\_4, 
P5\_11, P6\_1, P6\_2, P6\_5, P6\_13, P6\_16, P6\_17, P6\_18, P6\_19, and P6\_27), and
the left-over ones are often vulnerabilities (i.e., P4\_3, P4\_4, 
P5\_11, P6\_1, P6\_2, P6\_13, P6\_17, P6\_18 and P6\_27). 
By comparing with \toollight, we can observe that
disabling the precise taint analysis in \tool (i.e., the 2nd main step) may both improve (i.e., P6\_5) and degrade (i.e., P6\_16 and P6\_19) the capability of finding side-channel sources while reducing the verification time. 
Nevertheless, the precise taint analyses are still useful for finding all the potential side-channel sources when the 3rd main step runs out of time (i.e., P6\_5, P6\_16, and P6\_19), and the first two main steps (i.e., the lightweight and precise taint analysis) are often able to find all the side-channel sources (i.e., P4\_3, P4\_4,
P5\_11, P6\_1, P6\_2, P6\_13, P6\_14, P6\_15, P6\_17, P6\_18 and P6\_27) with comparable or less execution time than \toollight.

\begin{figure}[H]
  \centering\vspace{-1mm}
\begin{lstlisting}
uint32_t br_rsa_i15_private(const br_rsa_private_key *sk){
    const unsigned char *p = sk->p;   size_t plen = sk->plen;
	while (plen > 0 && *p == 0) { p++; plen--;} }
  \end{lstlisting}\vspace{-3mm}
  \caption{Simplified fragment of P6\_5 taken from the BearSSL library.}
  \label{fig:part6-5}\vspace{-3mm}
\end{figure}

\smallskip
\noindent
{\bf Case study}.
Fig.~\ref{fig:part6-5} shows one side-channel source $(4, *p ==0)$ of P6\_5 from BearSSL. Variable {\tt p} points to a buffer storing a large prime (for RSA) which is secret but the loop condition {\tt *p == 0} depends upon the content of the buffer. It leaks the number of leading zero of the large prime.

In summary, \tool is efficient and effective for finding side-channel sources 
and significantly outperforms {\sf ct-verif}. Both taint analyses can determine potential side-channel sources, reducing the cost of the subsequent safety verification and manual validation. 

\smallskip
\noindent
{\bf Further comparison}.
We also compare \tool with two sound but incomplete tools
Verasco~\cite{BPT17} and BINSEC (the latest version of Binsec/Rel~\cite{DanielBR23}).
Verasco performs taint analysis by abstract interpretation with bounded loops, while BINSEC uses relational
symbolic execution with bounded paths. We 
use relatively large benchmarks provided by the respective tools with security annotations to reduce the engineering efforts of modifying benchmarks, 
and compile benchmarks using the constant-time preserving compiler~\cite{BartheBGHLPT20}.

We observe that \tool is significantly more efficient than Verasco on
almost all the benchmarks (80.63 seconds vs. 1,127.63 seconds in total). Moreover, Verasco cannot output traces and values of scalar input variables leading to potential violations, and misses some timing side-channel sources on the DES implementation.
Compared with BINSEC, \tool is about 2 times faster than BINSEC when BINSEC’s execution time is independent of input size (e.g., curve25519-donna), and the speedup becomes more significant when BINSEC’s execution time increases with input size, e.g., BINSEC takes 0.36, 10.28 and 102.48 seconds on the libsodium\_chacha20\_xor implementation when the input size is 256, 256$\times$512 and 256$\times$5120 bits while
\tool takes 11.21 seconds without limiting the input size.

\subsection{Threats to Validity}
{\bf Internal threats.} The major internal threat to our evaluation is the correctness of the implementation of \tool which relies on various open-source frameworks and tools. To mitigate this threat, we compared  the results of \tool and {\sf ct-verif}, and manually analyzed all the side-channel sources discovered by both tools to confirm
 the correctness of \tool. Moreover, we note that these open-source frameworks and tools have been widely used for years so we would have reasonable confidence in their quality. A more scientific way is to build a verified toolchain, which however requires more resources and thus is left as interesting future work.

\smallskip
\noindent {\bf External threats.}
The major external threat to our evaluation is the benchmarks, as the performance of \tool may vary with benchmarks. To mitigate this threat, we consider both constant-time and non-constant-time implementations from 
widely used modern cryptographic and SSL/TLS libraries, as well as 
benchmarks used in prior work, e.g.,~\cite{ABBDE16,BPT17,DanielBR23}. Furthermore, the benchmarks are selected to be diverse in terms of both types and sizes. Another external threat is the translation from source code to LLVM IR which may introduce violations of constant-time, as demonstrated by~\cite{DanielBR23}. To mitigate this threat, benchmarks are only optimized by {\tt -mem2reg} which reduces redundant address-taken variables in LLVM IR. 


\section{Related Work} \label{sect:related}
Timing side-channels have received considerable attention (cf.~\cite{GeimerVRDBM23} for a survey). We roughly classify the current approaches into three categories. The first class does not rely on verification; the second class mostly leverages program analysis; the third class is largely based on verification.
There are verification approaches for other side-channels (e.g., ~\cite{QinJSCX22,ZGSW18,GXZSC19,FanSCZL22,BBDFGS15,GaoXSC2020,GZSC19,GXSZSC2020,GaoZSCS23})
and time-balance (e.g.,~\cite{AthanasiouCEMST18,AGHK17,BrennanSB18}), which are orthogonal to this work.

\smallskip
\noindent \textbf{Approaches based on concrete execution.} A large number of approaches have been proposed for detecting and/or quantifying timing side-channel leakages in terms of e.g., channel capacity and Shannon entropy.
To this end, for instance, 
DATA~\cite{WeiserZSMMS18}, 
ct-fuzz~\cite{HeEC20}  
and CacheQL~\cite{YLW23} 
make use of concrete execution.
CANAL~\cite{SungPW18},
Abacus~\cite{BaoWLLW21}, 
and ENCIDER~\cite{YavuzFHBBT23} leverage dynamic symbolic execution, symbolic execution on individual concrete execution traces, and concolic execution. 
CaType~\cite{JiangBWL022} detects timing side-channel vulnerabilities by applying type inference on individual concrete execution traces. 
In general, this class of approaches are often effective in bug-finding,  
but cannot prove the absence of timing side-channel leakages.

\smallskip
\noindent\textbf{Approaches based on program analysis.}
This class of work is more formal 
which often is able to prove the absence of timing side-channel leakage.
CacheAudit~\cite{DoychevFKMR13} bounds
timing side-channel leakages by over-approximating side-channel observations using abstract interpretation.
CacheS~\cite{WBLWZWu19} applies abstract interpretation, but its implementation is unsound due to its  
imprecise treatment of memory.
Taint analysis and security type systems have also been applied to detect side-channel leakage by tracking information flow
of the secrets~\cite{BBCLP14,BPT17,WattRPCS19}, varying in accuracy and efficiency. 
Moreover, the work \cite{BangAPPB16,Bultan19,SahaGLBB23} bounds the information leakage via symbolic execution and model-counting.
While these approaches are often sound, they may raise false positives. Program transformations have been proposed to eliminate potential side-channel sources~\cite{Agat00,WuGS018,CauligiSJBWRGBJ19}. 
Precise detection approaches can reduce the number of
potential side-channel sources but the adopted program transformations may bloat the program, making them less efficient to run.

\smallskip
\noindent \textbf{Approaches based on verification.}
There are mainly two approaches in this class, i.e.,  
self-composition~\cite{AlmeidaBPV13}
and relational symbolic execution~\cite{FarinaCG19}.
ct-verif~\cite{ABBDE16} uses a variant of self-composition (i.e., cross-product) to improve the efficiency of safety verification. ct-verif is complete assuming the completeness of the underlying safety checker.
Binsec/Rel~\cite{DanielBR20,DanielBR23} uses relational symbolic execution enhanced with secret-dependency tracking,
untainting and fault-packing to improve the efficiency. Due to the path explosion problem and unsupported dynamic memory allocation, it is only complete up to a given depth of paths, and the size of the symbolic input (keys, plaintext) has to be fixed. 
ct-verif targets LLVM IR, a target-independent low-level language, 
while Binsec/Rel targets
binary executables. (Note that the compilation from LLVM IR to binary executable may introduce vulnerabilities.) 
 
Our work generally falls into the third category.  
Similar to ct-verif, we focus on LLVM IR instead of binary executable or source code.  
Compared with the existing verification approaches, 
we extensively leverage taint analysis
to facilitate self-composition, especially for   
reducing safety checks and simplifying the self-composed program. A similar idea was adopted in Binsec/Rel, but in a different way. Our new methodology significantly improves both efficiency and effectiveness.

\smallskip
\noindent\textbf{Consideration of micro-architecture.} The above approaches  
did not address micro-architectural features which have also been studied in literature. 
For instance, Constantine~\cite{BorrelloDQG21} uses dynamic taint analysis;
Oo7~\cite{WangCGMR21} uses static taint analysis;
Binsec/Haunted~\cite{DanielBR21} uses relational symbolic execution; Pitchfork~\cite{CauligiDGTSRB20}, 
Spectector~\cite{GuarnieriKMRS20}, SpecuSym~\cite{GuoCLCWW020} 
and KleeSpectre~\cite{WangCBMR20} leverage (dynamic) symbolic execution; Blade~\cite{VassenaDGCKJTS21} uses type system; \cite{Wu019} uses abstract interpretation. 

Our work is orthogonal to these investigations, as they usually tackle the vulnerabilities brought by micro-architectural features with the assumption that 
the program itself is of constant-time.


%

\section{Conclusion} \label{sec:concl}
 
In this paper, we have provided practical verification approaches for constant-time implementations of cryptographic libraries. Our methods are based on a novel synergy of taint analysis and safety verification of self-composed programs. We have implemented a cross-platform and fully automated tool {\tool} working on LLVM IR. The tool has been extensively evaluated on a large set of real-world benchmarks from modern cryptographic and SSL/TLS and fixed-point arithmetic libraries.  
The experimental results have confirmed the efficacy of our approaches. In particular, compared to the state-of-the-art tool {\sf ct-verif}, \tool typically demonstrates 2-3 orders of magnitude of improvement, proving more programs and finding new timing leaks. 

At the methodology level, we showcase that a combination of lightweight approaches (taint analysis) and heavyweight approaches (self-composition and safety checking) can yield efficiency and completeness. In particular, we demonstrate that self-composition based approaches, which are normally considered to be powerful but costly, can be made scalable with the aid of static analysis. 

\section*{Data Availability}
To foster further research, source code and benchmarks are available at~\cite{CTProver}. 

\section*{Acknowledgement}
This work is supported by the Amazon Research Award from Amazon Web Services, National Natural Science Foundation of China under
grants No. 62072309 and 61872340,
CAS Project for Young Scientists in Basic Research (YSBR-040),  
ISCAS New Cultivation Project (ISCAS-PYFX-202201), 
ISCAS Fundamental Research Project (ISCAS-JCZD-202302),
Overseas grants from the State Key Laboratory of Novel Software Technology, Nanjing University (KFKT2022A03, KFKT2023A04), 
and Birkbeck BEI School Project (EFFECT).
	
	

\end{document}